\documentclass[12pt,preprint]{imsart} 

\RequirePackage[OT1]{fontenc}
\RequirePackage{amsthm,amsmath}
\RequirePackage[numbers]{natbib}
\RequirePackage[colorlinks,citecolor=blue,urlcolor=blue]{hyperref}

\usepackage{graphicx,psfrag,epsf}
\usepackage{enumerate}
\usepackage{natbib}
\usepackage{mathtools}
\usepackage{booktabs}
\usepackage[english]{babel} 
\usepackage{microtype} 
\usepackage{amsmath,amsfonts,amssymb}
\usepackage{chngcntr}
\usepackage[toc,page]{appendix}
\usepackage{ amssymb }
\usepackage{array}
\usepackage{booktabs} 
\usepackage{natbib}
\usepackage{array,booktabs,arydshln,xcolor}
\usepackage{rotating}
\bibliographystyle{apalike}

\usepackage{soul}
\usepackage{threeparttable}
\usepackage{tablefootnote}
\usepackage{color}
\usepackage{tabularx}
\usepackage[font = small,labelfont=bf,textfont=it]{caption} 
\usepackage{footnote}
\usepackage{algorithm}
\usepackage[noend]{algpseudocode}
\usepackage{footnote}
\usepackage{multirow}
\usepackage{diagbox}

\makeatletter
\newcommand{\distas}[1]{\mathbin{\overset{#1}{\kern\z@\sim}}}%
\newcommand{\bm}[1]{\mathbf{#1}}
\newsavebox{\mybox}\newsavebox{\mysim}
\newcommand{\distras}[1]{%
  \savebox{\mybox}{\hbox{\kern3pt$\scriptstyle#1$\kern3pt}}%
  \savebox{\mysim}{\hbox{$\sim$}}%
  \mathbin{\overset{#1}{\kern\z@\resizebox{\wd\mybox}{\ht\mysim}{$\sim$}}}%
}
\newtheorem{theorem}{Theorem}

\newtheorem{definition}{Definition}

\newtheorem{lemma}[theorem]{Lemma}

\newtheorem{proposition}{Proposition}

\newcolumntype{C}[1]{>{\centering\let\newline\\\arraybackslash\hspace{0pt}}m{#1}}

\setlength\heavyrulewidth{1.5pt} 

\newcommand{\ull}[1]{\underline{#1}}
\newcommand{\ulcrd}[1]{\ull{\ull{\crd{\textbf{#1}}}}}
\newcommand{\ulcbl}[1]{\ull{\cbl{#1}}}
\newcommand{\be}{\begin{equation}}
\newcommand{\ee}{\end{equation}}
\newcommand{\bi}{\begin{itemize}}
\newcommand{\ei}{\end{itemize}}
\newcommand{\ben}{\begin{enumerate}}
\newcommand{\een}{\end{enumerate}}
\newcommand{\stb}{\State $\bullet$ \;}
\newcommand{\crd}[1]{{\color{red}{#1}}}
\newcommand{\cbl}[1]{{\color{blue}{\textbf{#1}}}}

\newcolumntype{"}{@{\hskip\tabcolsep\vrule width 1pt\hskip\tabcolsep}}
\allowdisplaybreaks

\makeatother



\let\oldbibliography\thebibliography
\renewcommand{\thebibliography}[1]{\oldbibliography{#1}
\setlength{\itemsep}{0pt}} 

%


\begin{document}

\begin{frontmatter}
\title{Robust designs for Gaussian process emulation of computer experiments}
\runtitle{Robust designs for computer experiments}

\begin{aug}
\author{\fnms{Simon} \snm{Mak}\ead[label=e1]{sm769@duke.edu}}\footnote{Department of Statistical Science, Duke University} and
\author{\fnms{V. Roshan} \snm{Joseph}\ead[label=e2]{roshan@gatech.edu}}\footnote{School of Industrial \& Systems Engineering, Georgia Institute of Technology}

\runauthor{S. Mak and V. R. Joseph}

\end{aug}

\begin{abstract}
\; We study in this paper two classes of experimental designs, support points and projected support points, which can provide robust and effective emulation of computer experiments with Gaussian processes. These designs have two important properties that are appealing for surrogate modeling of computer experiments. First, the proposed designs are robust: they enjoy good emulation performance over a wide class of smooth and rugged response surfaces. Second, they can be efficiently generated for large designs in high dimensions using difference-of-convex programming. In this work, we present a theoretical framework that investigates the above properties, then demonstrate their effectiveness for Gaussian process emulation in a suite of numerical experiments.
\end{abstract}


\begin{keyword}
Computer experiments, emulation, energy distance, Gaussian process, kriging, robust design
\end{keyword}

\end{frontmatter}

\section{Introduction}

With advances in scientific computing, computer experiments are quickly replacing physical experiments in modern scientific and engineering applications. Computer experiments employ sophisticated computer code to reliably simulate complex physical phenomena, with successful applications in rocket design \citep{bernardini2020large,mak2018efficient,chang2021reduced}, sustainable aviation \citep{miller2024diverse,narayanan2024misfire} and particle physics \citep{putschke2019jetscape,li2023additive,ji2022multi}. Despite such advances, computer experiments face an inherent challenge: the computer code for a single input setting can take days or weeks to run, which makes its exploration over a desired design space highly challenging. The \textit{design} of a computer experiment, i.e., the selection of which inputs to conduct simulations, is a crucial step for accurate emulation of the simulated response surface over the design space. This paper studies two class of designs, called \textit{support points} and \textit{projected support points}, which can provide {robust} and {effective} emulation of computer experiments.

Existing work on computer experiment design can be grouped into two categories: (a) {model-specific} designs, which depend on a given stochastic model for the response surface, and (b) {distance-based} designs, which optimize some criterion quantifying the distance between a design and its design space. The first category includes the integrated mean-squared error designs \citep{Sea1989a, Pea2017}, while the latter includes the so-called \textit{minimax} and \textit{maximin} designs \citep{Jea1990}. There is recent renewed interest in studying the theoretical properties and construction algorithms for the latter two flavors of design (e.g., \cite{Tan2013, MJ2017a, He2017} for minimax, and \cite{MM1995, DP2010} for maximin). The designs explored in this paper fall within the category of distance-based designs as well.

The notion of distance-based point sets has also garnered some interest in numerical integration. Of particular interest are the \textit{support points} (SPs) in \cite{MJ2017b}, which serve as representative points (rep-points) for a general distribution $F$. These points enjoy several important theoretical properties \citep{MJ2017b}: they converge in distribution to $F$, and also provide an improved integration rate to Monte Carlo for a large class of integrands. This is further extended in \cite{MJ2017c}, who introduced the \textit{projected support points} (PSPs) as a modification to SPs when the sample space of $F$ is high-dimensional. PSPs can be shown to be representative of not only the full distribution $F$, but also of its marginal distributions. As such, PSPs can provide good integration performance for high-dimensional functions with low-dimensional structure.

However, an ideal point set for \textit{representation} or \textit{integration} may not necessarily be good for the \textit{emulation} of response surfaces. One prominent example of this is the uniform design in \cite{Fan1980}, which minimizes the Kolmogorov-Smirnov statistic \citep{Kol1933} -- the worst-case discrepancy between the uniform hypercube $F = \mathcal{U}[0,1]^p$ and the empirical distribution of the design. While such designs enjoy an ideal representation of the uniform distribution $F$, they can have poor emulation performance \citep{Wie1991,Sea2013}. Conversely, when \textit{polynomials} are used as the basis-of-choice for emulation, it is well-known that Chebyshev nodes provide excellent interpolation performance \citep{Tre2013}, but such nodes converge to the arc-sine (and not the uniform) distribution. Similarly, \cite{WZ1997} and \cite{Puk2006} show that optimal designs for large-degree polynomial regression also converge to the same arc-sine distribution. \cite{DP2010} makes use of this arc-sine distribution to transform existing Latin hypercube designs for computer experiments.

We present in this paper a different but \textit{complementary} view to the above body of work -- we show that the support points in \cite{MJ2017b}, which provide good representation and integration on $F = \mathcal{U}[0,1]^p$, also offer excellent theoretical properties and practical performance for Gaussian process (GP) emulation. The key is in employing a \textit{different} measure of uniformity called the \textit{energy distance} \citep{SR2004}. In contrast to the usual Kolmogorov-Smirnov statistic, the energy distance uses a \textit{distance-based} kernel for quantifying deviances from uniformity for the design. This distance-based kernel yields three appealing design properties for SPs and PSPs, which we discuss in the paper. First and foremost, we show that these designs are \textit{robust} for Gaussian process modeling, in that they enjoy good emulation performance for both smooth functions (e.g., polynomials) and more rugged functions. We demonstrate this robustness property both theoretically via an information-theoretic argument, and empirically via simulation studies. Second, we derive important insights connecting SPs and PSPs with existing \textit{distance-based} designs, and demonstrate via such insights why the proposed designs can provide better performance. Lastly, using difference-of-convex optimization, we show that SPs and PSPs can be efficiently generated for large designs in high dimensions.

The paper is structured as follows. Section 2 introduces SPs and PSPs, and Section 3 argues why such designs are robust for Gaussian process emulation. Section 4 shows that SPs are a compromise between minimax and maximin designs, with greater weight on the former. Section 5 presents further insights connecting SPs and PSPs with existing designs. Section 6 details an improved algorithm for efficiently generating SPs and PSPs. Sections 7 and 8 explore the emulation performance of these designs in a suite of numerical experiments. Section 9 concludes with directions for future work.

\section{Preliminaries}
\label{sec:prelim}
We first provide a brief review of support points and projected support points, following \cite{MJ2017c,MJ2017b}.

\subsection{Support points}
\label{sec:sp}
Consider first the \textit{energy distance} between two distribution functions (d.f.s) $F$ and $G$, both defined on the sample space $\mathcal{X} \subseteq \mathbb{R}^p$:

\begin{definition}[Energy distance; \cite{SR2004}]
Let $\bm{X},\bm{X}' \;  \distas{i.i.d.} G$ and $\bm{Y}, \bm{Y}' \distas{i.i.d.} F$, where $F$ and $G$ are d.f.s on $\mathcal{X} \subseteq \mathbb{R}^p$ with $\mathbb{E}\|\bm{X}\|_2 < \infty$ and $\mathbb{E}\|\bm{Y}\|_2 < \infty$. The \textup{energy distance} between $F$ and $G$ is defined as:
\begin{equation}
\mathcal{E}(F,G) \equiv 2 \mathbb{E}\|\bm{X}-\bm{Y}\|_2 - \mathbb{E}\|\bm{Y} - \bm{Y}'\|_2 - \mathbb{E}\|\bm{X} - \bm{X'}\|_2.
\label{eq:energy}
\end{equation}
\label{def:energy}
\end{definition}
\vspace{-0.7cm}
\noindent An important property of the energy distance is that $\mathcal{E}(F,G)$ is non-negative, and attains a minimum value of 0 if and only if $F$ and $G$ are the same distribution. This is known as the \textit{metric property} in \cite{SZ2013}, and is important for ensuring that SPs (and its variants) provide a good representation of $F$.

In addition to having an appealing interpretation as a statistical potential measure \citep{SZ2013}, the energy distance has been widely used as an efficient goodness-of-fit test of a sample $\{\bm{x}_i\}_{i=1}^n$ to a pre-specified distribution $F$ \citep{SR2004}. In this light, the \textit{support points} of $F$ \citep{MJ2017b} are defined as the point set yielding the best goodness-of-fit to $F$:

\begin{definition}
[Support points; \cite{MJ2017b}]
For a given distribution $F$ on $\mathcal{X} \subseteq \mathbb{R}^p$ with finite mean, the \textup{support points (SPs)} for $F$ are defined as:
\begin{equation}
\underset{\bm{x}_1, \cdots, \bm{x}_n}{\textup{Argmin}} \; \mathcal{E}(F,F_n) = \underset{\bm{x}_1, \cdots, \bm{x}_n}{\textup{Argmin}} \; \left\{\frac{2}{n} \sum_{i=1}^n \mathbb{E}\|\bm{x}_i - \bm{Y}\|_2 - \frac{1}{n^2} \sum_{i=1}^n \sum_{j=1}^n \|\bm{x}_i - \bm{x}_j\|_2 \right\},
\label{eq:supp}
\end{equation}
where $\bm{Y} \sim F$, and $F_n$ is the empirical distribution function (e.d.f.) of the point set $\{\bm{x}_i\}_{i=1}^n \subseteq \mathcal{X}$.
\label{def:supp}
\end{definition}
\noindent One can show (Theorem 2 in \cite{MJ2017b}) that SPs converge in distribution to $F$ as $n \rightarrow \infty$, i.e., when the design grows large, so these points are indeed \textit{representative} of the desired distribution $F$.

From a design perspective, the formulation in \eqref{eq:supp} has the following charming interpretation. By \textit{minimizing} the first term $\sum_{i=1}^n \mathbb{E}\|\bm{x}_i - \bm{Y}\|_2$, one minimizes the expected Euclidean distance of each point to $F$, thereby ensuring all design points are close to the desired distribution. Likewise, by \textit{maximizing} the second term $\sum_{i=1}^n \sum_{j=1}^n \|\bm{x}_i - \bm{x}_j\|_2$, one maximizes the sum of pairwise distance between design points, thereby ensuring points are pushed apart over the design space $\mathcal{X}$. For the uniform hypercube $F = \mathcal{U}[0,1]^p$, these two objectives are quite similar to the goals for the distance-based minimax and maximin designs;  we explore this in greater detail in Section \ref{sec:minmax}.

\subsection{Projected support points}
\label{sec:psp}

While support points provide good representative points for the full distribution $F$, it may not provide a good representation for the marginal distributions of $F$. To this end, \cite{MJ2017c} proposed the following \textit{projected support points}:
\begin{definition}[Projected support points; \cite{MJ2017c}]
Define the generalized Gaussian kernel:
\begin{equation}
\gamma_{\boldsymbol{\theta}}(\bm{x},\bm{y}) = \exp\left\{ -\sum_{\emptyset \neq \bm{u} \subseteq [p]} \theta_{\bm{u}} \|\bm{x}_{\bm{u}} - \bm{y}_{\bm{u}}\|_2^2\right\},
\label{eq:kern}
\end{equation}
where $[p] = \{1, \cdots, p\}$. For a d.f. $F$ on $\mathcal{X} \subseteq \mathbb{R}^p$, the \textup{$\pi$-projected support points} for $F$ are defined as:
\begin{equation}
\underset{\bm{x}_1, \cdots, \bm{x}_n}{\textup{Argmin}} \; \left\{- \frac{2}{n} \sum_{i=1}^n \mathbb{E}_{\bm{Y},\boldsymbol{\theta} \sim \pi}\left[\gamma_{\boldsymbol{\theta}}(\bm{x}_i,\bm{Y})\right] + \frac{1}{n^2} \sum_{i=1}^n \sum_{j=1}^n \mathbb{E}_{\boldsymbol{\theta} \sim \pi}\left[\gamma_{\boldsymbol{\theta}}(\bm{x}_i,\bm{x}_j)\right] \right\},
\label{eq:psp}
\end{equation}
where $\bm{Y} \sim F$ and $\pi$ is a prior on the scale parameters $\boldsymbol{\theta}=(\theta_{\bm{u}})_{\emptyset \neq \bm{u} \subseteq [p]}$.
\label{def:psp}
\end{definition}
\noindent From a Quasi-Monte Carlo (QMC) point-of-view, PSPs minimize the kernel-based discrepancy \citep{Hic1998} under kernel $\gamma_{\boldsymbol{\theta}}$, subject to a prior distribution on scale parameters $\boldsymbol{\theta}$. The choice of the generalized Gaussian kernel in \eqref{eq:psp} will become clear in Section \ref{sec:maxpro}, when PSPs are compared with existing designs.

Comparing the formulation in \eqref{eq:psp} to that in \eqref{eq:supp}, we see that PSPs enjoy a similar interpretation to SPs from a design perspective -- design points are \textit{pulled} towards the desired distribution $F$, but also \textit{push} away from other design points. The key difference between SPs and PSPs is the latter employs the generalized Gaussian kernel in \eqref{eq:kern} to quantify similarities on both the \textit{full} design space $[0,1]^p$, as well as on its \textit{projected} subspaces $[0,1]^{\bm{u}}$, $\bm{u} \subseteq [p]$. In particular, \cite{MJ2017c} showed that a larger value of $\theta_{\bm{u}}$ (a) reflects greater importance of subspace $\bm{u}$, and (b) encourages better representativeness of the point set for the marginal distribution indexed by $\bm{u}$. The goal, then, is to judiciously choose the prior distribution $\pi$ so that only several coefficients in $\boldsymbol{\theta}$ are large with high probability. This then encourages the resulting PSPs in \eqref{eq:psp} to have good fit on both $F$ and its marginal distributions.

To this end, \cite{MJ2017c} proposed a prior specification on $\pi$ which imposes the three fundamental experimental design principles of effect hierarchy, heredity and sparsity \citep{BH1961, HW1992, WH2011}. In words, effect hierarchy presumes that lower-order effects are more likely active than higher-order ones in a response surface; effect heredity (in the strong sense) states that a higher-order effect is active only when all its lower-order components are active as well; and effect sparsity assumes the surface is comprised of a small number of effects. Together, these principles are fundamental to experimental design, because they allow for meaningful analysis using a limited amount of experimental data (which are typically expensive to collect). By imposing such principles into the generalized kernel \eqref{eq:kern}, the resulting PSPs should provide excellent designs for computer experiments, because such designs incorporate these expected response surface properties \textit{within} the design construction itself. This is indeed the case, and we explore this in greater depth in Section \ref{sec:con}.

\section{Support points as robust designs for GP modeling}
\label{sec:robust}
We first show why support points are robust and effective computer experiment designs when the underlying response surface follows a Gaussian process. A brief review of GP modeling is provided below for context. Unless otherwise stated, we assume from now on that $F = \mathcal{U}[0,1]^p$ is the uniform hypercube.

\subsection{Gaussian process modeling}
\label{sec:gp}
Let $f: [0,1]^p \rightarrow \mathbb{R}$ be the underlying response surface of a computer experiment. Suppose the vector of function responses $\bm{f} \equiv [f(\bm{x}_i)]_{i=1}^n$ has been collected at input settings $\mathcal{D} = \{\bm{x}_i\}_{i=1}^n$. A popular approach for estimating $f(\cdot)$ at an unobserved input setting $\bm{x}_{new}$ is to assume $f$ follows the Gaussian process $f(\cdot) \sim GP\{\mu, \sigma^2 r_{\boldsymbol{\theta}}(\cdot,\cdot)\}$. Here, $\mu$ and $\sigma^2$ are the process mean and variance, and $r_{\boldsymbol{\theta}}(\cdot,\cdot)$ is a correlation function with scale parameters $\boldsymbol{\theta} \in \mathbb{R}^p$.

Under such a model, the conditional expectation of $f(\bm{x}_{new})$ becomes \citep{Sea2013}:
\begin{equation}
\hat{f}(\bm{x}_{new}) \equiv \mathbb{E}[f(\bm{x}_{new})|\bm{f}] = \mu + \bm{r}_{new}^T\bm{R}_n^{-1}(\bm{f}-\mu \cdot \bm{1}),
\label{eq:pred}
\end{equation}
where $\bm{r}_{new} = [r_{\boldsymbol{\theta}}(\bm{x}_i,\bm{x}_{new})]_{i=1}^n$ and $\bm{R}_n = {[r_{\boldsymbol{\theta}}(\bm{x}_i,\bm{x}_j)]_{i=1}^n}_{j=1}^n$. Using this conditional expectation as the predictor for $f(\bm{x}_{new})$, the \textit{root-mean-squared error} (RMSE) of such a predictor can be written as the closed-form expression:
\begin{equation}
RMSE_{\boldsymbol{\theta}}(\bm{x}_{new}; \mathcal{D}) \equiv \sqrt{\text{Var}\{\hat{f}(\bm{x}_{new})\}} = \sigma \sqrt{1 - \bm{r}_{new}^T\bm{R}_n^{-1}\bm{r}_{new}}.
\label{eq:rmse}
\end{equation}
\noindent The fact that both prediction and uncertainty quantification can be expressed in closed-form makes the Gaussian process model the de-facto approach for modeling computer experiments \citep{Sea2013}.


\subsection{Robustness of support points}
One key challenge in designing experiments for GP modeling is that the correlation function $r_{\boldsymbol{\theta}}(\cdot,\cdot)$, along with its parameters $\boldsymbol{\theta}$, typically need to be specified a priori, even when little is known on the response surface \citep{Sea2013}. We outline below an argument for why SPs can serve as \textit{robust} designs for GP modeling, in that it provides near-optimal performance over a wide range of response surfaces, without the need for specifying model parameters a priori. 

Consider first the following decomposition of the response surface $f(\bm{x})$:
\begin{equation}
f(\bm{x}) = \mathcal{P}_{\mathcal{G}} f + \mathcal{P}_{\mathcal{G}}^{\bot} f (\bm{x}),
\label{eq:mean}
\end{equation}
where $\mathcal{G}$ is the space of constant functions on $[0,1]^p$, $\mathcal{P}_{\mathcal{G}} f \equiv \int_{[0,1]^p} f(\bm{x}) \; d\bm{x}$ is the mean of $f$ when projected onto $\mathcal{G}$, and $\mathcal{P}_{\mathcal{G}}^{\bot} f \equiv f - \mathcal{P}_{\mathcal{G}} f$ is the perpendicular component of $f$ representing its variation around its mean. The following proposition provides a useful property of the variation term $\mathcal{P}_{\mathcal{G}}^{\bot} f (\bm{x})$:
\begin{lemma}
Let $\mathcal{P}_{\mathcal{G}} f$ and $\mathcal{P}_{\mathcal{G}}^{\bot} f$ be as defined above. If $f \sim GP\{\mu, \sigma^2 r_{\boldsymbol{\theta}}(\cdot,\cdot)\}$, then $\mathcal{P}_{\mathcal{G}}^{\bot} f \sim GP\{ 0, \sigma^2 r_{\boldsymbol{\theta},\mathcal{G}}(\cdot,\cdot) \}$, where, with $\bm{Y}, \bm{Y}' \; \distas{i.i.d.} \mathcal{U}[0,1]^p$:
\begin{equation}
r_{\boldsymbol{\theta},\mathcal{G}}(\bm{x},\bm{y}) = r_{\boldsymbol{\theta}}(\bm{x},\bm{y}) - \mathbb{E}\{ r_{\boldsymbol{\theta}}(\bm{x},\bm{Y}) \} - \mathbb{E}\{r_{\boldsymbol{\theta}}(\bm{Y},\bm{y}) \}
+ \mathbb{E}\{ r_{\boldsymbol{\theta}}(\bm{Y},\bm{Y}')\}.
\label{eq:projkern}
\end{equation}
\label{lem:projkern}
\end{lemma}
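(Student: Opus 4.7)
The plan is to observe that $\mathcal{P}_{\mathcal{G}}^{\bot} f(\bm{x}) = f(\bm{x}) - \int_{[0,1]^p} f(\bm{y})\, d\bm{y}$ is a continuous linear functional of the sample path $f$. Since linear (or more precisely, affine) transformations of a Gaussian process yield another Gaussian process, the conclusion reduces to computing the mean and covariance kernel of $\mathcal{P}_{\mathcal{G}}^{\bot} f$ and then invoking Gaussianity of finite-dimensional marginals.

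First I would verify the mean: by Fubini (using the integrability hypothesis $\mathbb{E}\|\bm{X}\|_2 < \infty$ style conditions that ensure the integral of $f$ has finite mean, which follows from the assumed process being $L^1$),
\begin{equation}
\mathbb{E}\{\mathcal{P}_{\mathcal{G}}^{\bot} f(\bm{x})\} = \mathbb{E}\{f(\bm{x})\} - \int_{[0,1]^p} \mathbb{E}\{f(\bm{y})\}\, d\bm{y} = \mu - \mu = 0.
\end{equation}

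Next I would compute the covariance kernel by bilinearity of covariance, writing $\mathcal{P}_{\mathcal{G}} f = \int f(\bm{Y})\,d\bm{Y}$ where $\bm{Y} \sim \mathcal{U}[0,1]^p$, and similarly with an independent copy $\bm{Y}'$ for the second argument. Expanding
\begin{align*}
\text{Cov}\{\mathcal{P}_{\mathcal{G}}^{\bot} f(\bm{x}),\, \mathcal{P}_{\mathcal{G}}^{\bot} f(\bm{y})\}
&= \text{Cov}\{f(\bm{x}) - \mathbb{E}_{\bm{Y}} f(\bm{Y}),\; f(\bm{y}) - \mathbb{E}_{\bm{Y}'} f(\bm{Y}')\}
\end{align*}
into four pieces, each piece simplifies by pulling the (deterministic) Lebesgue integral outside the covariance via Fubini, giving $\sigma^2 r_{\boldsymbol{\theta}}(\bm{x},\bm{y})$, $-\sigma^2\mathbb{E}\{r_{\boldsymbol{\theta}}(\bm{x},\bm{Y}')\}$, $-\sigma^2\mathbb{E}\{r_{\boldsymbol{\theta}}(\bm{Y},\bm{y})\}$, and $+\sigma^2\mathbb{E}\{r_{\boldsymbol{\theta}}(\bm{Y},\bm{Y}')\}$, respectively. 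Summing yields exactly $\sigma^2 r_{\boldsymbol{\theta},\mathcal{G}}(\bm{x},\bm{y})$.

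Finally, to promote this from ``a process with the stated mean and covariance'' to ``a Gaussian process,'' I would check that any finite linear combination $\sum_k \alpha_k \mathcal{P}_{\mathcal{G}}^{\bot} f(\bm{x}_k)$ is Gaussian. Each such combination equals $\sum_k \alpha_k f(\bm{x}_k) - (\sum_k \alpha_k)\int f(\bm{y})\,d\bm{y}$. The integral $\int f(\bm{y})\,d\bm{y}$ is an $L^2$ limit of Riemann sums of $f$, each of which is jointly Gaussian with the $f(\bm{x}_k)$'s; since $L^2$ limits of Gaussian vectors remain Gaussian, the finite-dimensional marginals of $\mathcal{P}_{\mathcal{G}}^{\bot} f$ are Gaussian, establishing the GP claim.

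The main obstacle is the measure-theoretic bookkeeping: one needs to verify that $\int_{[0,1]^p} f(\bm{y})\, d\bm{y}$ is well-defined almost surely (sufficient regularity of sample paths, or work in the $L^2$ sense) and that Fubini applies in both the mean and covariance calculations. Under standard assumptions on $r_{\boldsymbol{\theta}}$ (continuous, bounded on the compact domain $[0,1]^p$), both requirements follow from $\int_{[0,1]^p}\int_{[0,1]^p} |r_{\boldsymbol{\theta}}(\bm{y},\bm{y}')|\, d\bm{y}\, d\bm{y}' < \infty$; beyond this, the argument is entirely mechanical.
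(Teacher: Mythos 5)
Your proposal is correct and follows essentially the same route as the paper's proof: compute the mean by interchanging expectation with the integral over $[0,1]^p$, expand the covariance bilinearly into the four terms of \eqref{eq:projkern}, and appeal to the fact that an affine image of a Gaussian process is Gaussian. Your added justification of joint Gaussianity via $L^2$ limits of Riemann sums is a welcome bit of extra rigor on a step the paper simply asserts, but it does not change the argument.
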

\begin{proof}
This proof follows from a similar result in Section 3.1 of \cite{Tuo2017}. Suppose $f$ follows the Gaussian process $GP\{\mu,\sigma^2 r_{\boldsymbol{\theta}}(\cdot,\cdot)\}$. Since a Gaussian-distributed vector is still Gaussian-distributed after projection, it follows that the projected process $\mathcal{P}_{\mathcal{G}}^{\bot} f$ is also a Gaussian process. The mean of such a process is $\mathbb{E}\{ \mathcal{P}_{\mathcal{G}}^{\bot} f (\bm{x})\} = \mathbb{E}\left\{ f(\bm{x}) - \int_{[0,1]^p} f(\bm{z})\;d\bm{z}\right\} = \mathbb{E}\{f(\bm{x})\} - \int_{[0,1]^p} \mathbb{E}\{  f(\bm{z}) \} \; d\bm{z} = \mu - \mu = 0$, where the third-last step follows from the bounded convergence theorem (BCT), since $[0,1]^p$ is bounded. Similarly, its covariance function can be derived as:
\small
\begin{align*}
\textup{Cov}\{\mathcal{P}_{\mathcal{G}}^{\bot} f(\bm{x}), \mathcal{P}_{\mathcal{G}}^{\bot} f(\bm{y})\} &= \mathbb{E}\left\{ \left( f(\bm{x}) - \int_{[0,1]^p} f(\bm{z})\;d\bm{z} \right) \left( f(\bm{y}) - \int_{[0,1]^p} f(\bm{z}) \; d\bm{z} \right) \right\}\notag\\
&= \mathbb{E}\left\{ f(\bm{x}) f(\bm{y}) \right\} - \int_{[0,1]^p} \mathbb{E}\{f(\bm{x}) f(\bm{z}) \} \; d\bm{z}\notag\\
& \quad  - \int_{[0,1]^p} \mathbb{E}\{f(\bm{y}) f(\bm{z}) \} \; d\bm{z} + \int_{[0,1]^p} \int_{[0,1]^p} \mathbb{E}\{f(\bm{z})f(\bm{z}')\} \; d\bm{z} d\bm{z}' \\
&= \text{Cov}\left\{ f(\bm{x}), f(\bm{y}) \right\} - \int_{[0,1]^p} \text{Cov}\{f(\bm{x}),f(\bm{z}) \} \; d\bm{z}\notag\\
& \quad  - \int_{[0,1]^p} \text{Cov}\{f(\bm{y}), f(\bm{z}) \} \; d\bm{z} + \int_{[0,1]^p} \int_{[0,1]^p} \text{Cov}\{f(\bm{z}), f(\bm{z}')\} \; d\bm{z} d\bm{z}' \\
& \hspace{2.5cm} \text{(since $\mathbb{E}\{f(\bm{x})f(\bm{y})\} = \text{Cov}\{f(\bm{x}), f(\bm{y})\} + \mu^2$)}\\
&= \sigma^2 \left[ r_{\boldsymbol{\theta}}(\bm{x},\bm{y}) - \mathbb{E}\{ r_{\boldsymbol{\theta}}(\bm{x},\bm{Y}) \} - \mathbb{E}\{r_{\boldsymbol{\theta}}(\bm{Y},\bm{y}) \}
+ \mathbb{E}\{ r_{\boldsymbol{\theta}}(\bm{Y},\bm{Y}')\}\right], \notag
\end{align*}
\normalsize
where the third-last step follows from BCT. This completes the proof.
\end{proof}
\noindent In other words, when $f$ follows a GP with correlation $r_{\boldsymbol{\theta}}$, its projection $\mathcal{P}_{\mathcal{G}}^{\bot} f$ also follows a GP with so-called \textit{projected kernel} $r_{\boldsymbol{\theta},\mathcal{G}}$ (see \cite{PJ2016,Tuo2017} for details).

With this in hand, we now show that under general conditions on $f$, SPs can provide optimal estimation of both the \textit{mean} term $\mathcal{P}_{\mathcal{G}} f$ and the \textit{variation} term $\mathcal{P}_{\mathcal{G}}^{\bot} f (\bm{x})$. Consider the first goal of estimating the \textit{mean} term $\mathcal{P}_{\mathcal{G}} f = \int_{[0,1]^p} f(\bm{x}) \; d\bm{x}$, which can be viewed as a numerical integration problem with $f$ being the desired integrand. The following proposition, adapted from Theorem 4 of \cite{MJ2017b}, shows that SPs are optimal for estimating the integral $\mathcal{P}_{\mathcal{G}} f$ under general conditions on $f$:
\begin{proposition}[Estimation of $\mathcal{P}_{\mathcal{G}} f$; \cite{MJ2017b}]
Let $f: [0,1]^p \rightarrow \mathbb{R}$, and suppose $f \in W_{\lceil (p+1)/2 \rceil,2}$, where $W_{k,2}$, $k > 0$, is the \textup{Sobolev function space} on $[0,1]^p$:
\small
\begin{equation}
W_{k,2} = \Big\{ g \in \mathcal{L}_2([0,1]^p) \; : \; D^{\boldsymbol{\alpha}} g \in \mathcal{L}_2([0,1]^p), \; \forall |\boldsymbol{\alpha}| \leq k \Big\}, \;  D^{\boldsymbol{\alpha}} g \equiv \frac{\partial^{|\boldsymbol{\alpha}|} g}{\partial x_1^{\alpha_1} \cdots \partial x_p^{\alpha_p}},
\label{eq:sobolev}
\end{equation}
\normalsize
and $\mathcal{L}_2([0,1]^p)$ is the space of $\mathcal{L}_2$-integrable functions on $[0,1]^p$. Further, let $I(f;F,F_n) \equiv \left|\mathcal{P}_{\mathcal{G}} f - \int_{[0,1]^p} f(\bm{x}) \; dF_n(\bm{x})\right|$ be the integration error of $f$ under design $\mathcal{D} = \{\bm{x}_i\}_{i=1}^n$. Then:
\begin{equation}
I(f;F,F_n) \leq V(f) \sqrt{\mathcal{E}(F,F_n)},
\label{eq:interr}
\end{equation}
where $V(f)$ is a norm of the integrand $f$ (see \cite{MJ2017b}).
\label{thm:interr}
\end{proposition}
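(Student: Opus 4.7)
The plan is to recast the integration-error inequality as a Cauchy--Schwarz bound in a reproducing kernel Hilbert space (RKHS), where the relevant kernel is the one naturally associated with the energy distance. The ``distance kernel'' $K(\bm{x},\bm{y}) = \psi(\bm{x}) + \psi(\bm{y}) - \|\bm{x}-\bm{y}\|_2$, for an appropriate choice of $\psi$ that centers $K$, is known to be positive definite on $\mathbb{R}^p$ and generates an RKHS $\mathcal{H}_K$ whose squared maximum mean discrepancy (MMD) between $F$ and $F_n$ coincides exactly with the energy distance $\mathcal{E}(F,F_n)$ of Definition~\ref{def:energy}. This identification is standard in the energy-statistics literature and is a direct consequence of the conditional negative-definiteness of $\|\cdot\|_2$.

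With this kernel in hand, I would express the integration error as an inner product. Letting $\mu_F$ and $\mu_{F_n}$ denote the mean embeddings of $F$ and $F_n$ in $\mathcal{H}_K$, any $f \in \mathcal{H}_K$ satisfies $\int f\, dF - \int f\, dF_n = \langle f, \mu_F - \mu_{F_n}\rangle_{\mathcal{H}_K}$ by the reproducing property and Fubini. Applying Cauchy--Schwarz then yields
\begin{equation*}
I(f;F,F_n) = \bigl|\langle f,\mu_F - \mu_{F_n}\rangle_{\mathcal{H}_K}\bigr| \;\leq\; \|f\|_{\mathcal{H}_K}\,\|\mu_F - \mu_{F_n}\|_{\mathcal{H}_K},
\end{equation*}
and the rightmost factor equals $\sqrt{\mathcal{E}(F,F_n)}$ by the kernel--energy identity above. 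Taking $V(f) = \|f\|_{\mathcal{H}_K}$ then produces the claimed inequality \eqref{eq:interr}.

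The remaining task, and in my view the main obstacle, is to justify that the Sobolev hypothesis $f \in W_{\lceil (p+1)/2\rceil, 2}$ places $f$ inside $\mathcal{H}_K$ with a finite and controllable norm. Using the Fourier representation $\|\bm{x}-\bm{y}\|_2 = c_p \int_{\mathbb{R}^p} \bigl(1-\cos(\boldsymbol{\omega}^T(\bm{x}-\bm{y}))\bigr)\,\|\boldsymbol{\omega}\|_2^{-(p+1)}\,d\boldsymbol{\omega}$, the kernel's spectral density decays like $\|\boldsymbol{\omega}\|_2^{-(p+1)}$, which identifies $\mathcal{H}_K$ with a fractional Sobolev space of order $(p+1)/2$. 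Combining the integer-order Sobolev embedding $W_{\lceil(p+1)/2\rceil,2}([0,1]^p) \hookrightarrow W_{(p+1)/2,2}([0,1]^p)$ with a standard extension operator from the cube to $\mathbb{R}^p$ then gives a continuous embedding into $\mathcal{H}_K$, and $V(f)$ can be taken proportional to the induced Sobolev norm. Since the proposition is flagged as an adaptation of Theorem~4 of \cite{MJ2017b}, I would invoke that result directly for this last embedding step rather than reprove the spectral characterization, and simply verify that the smoothness index $\lceil(p+1)/2\rceil$ matches.
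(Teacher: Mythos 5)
Your proposal is correct and follows essentially the same route as the paper: the paper defers the details to Theorem 4 of \cite{MJ2017b} and explicitly notes that the bound ``is obtained using the Cauchy--Schwarz inequality in an appropriately set reproducing kernel Hilbert space,'' which is precisely the mean-embedding/energy-distance argument you lay out (including the cancellation of the centering terms $\psi$ so that the squared MMD equals $\mathcal{E}(F,F_n)$). The one step left implicit in both treatments---that $f\in W_{\lceil (p+1)/2\rceil,2}$ lies in the RKHS of the distance kernel with $V(f)$ a controlled norm---is correctly identified by you as the crux and, like the paper, appropriately delegated to the cited reference.
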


This proposition is a variant of the so-called Koksma-Hlawka inequality (see, e.g., \cite{Dea2013}), connecting the integration error $I(f;F,F_n)$ to the energy distance $\mathcal{E}(F,F_n)$ via an upper bound. Such a bound can be shown to be tight (i.e., there exists some function $f$ in the Sobolev space $W_{\lceil (p+1)/2 \rceil,2}$ for which \eqref{eq:interr} holds with equality), since it is obtained using the Cauchy-Schwarz inequality in an appropriately set reproducing kernel Hilbert space. SPs, by minimizing the energy distance $\mathcal{E}(F,F_n)$ in this tight upper bound, can therefore uniformly reduce the estimation error for $\mathcal{P}_{\mathcal{G}} f$ over all $f \in W_{\lceil (p+1)/2 \rceil,2}$, i.e., all response surfaces whose $\lceil (p+1)/2 \rceil$-th order differentials have finite $\mathcal{L}_2$-norm. 



Consider now the second goal of estimating the \textit{variation} term $\mathcal{P}_{\mathcal{G}}^{\bot} f (\bm{x})$; we employ below an information-theoretic argument to show why SPs also provide optimal estimation of such a term. Recall that the \textit{maximum entropy design} \citep{SW1987}, which maximizes the determinant of the correlation matrix $\bm{R}_n$, also maximizes the \textit{Shannon information} \citep{Sha1948} gain on the response surface $f$. Using an application of Gershgorin's circle theorem \citep{Ger1931}, one can derive the following approximate formulation for a maximum entropy design (see Section 5 of \cite{Jea2015} for details):
\begin{equation}
\underset{\bm{x}_1, \cdots, \bm{x}_n}{\textup{Argmax}} \; \det(\bm{R}_n) \approx \underset{\bm{x}_1, \cdots, \bm{x}_n}{\textup{Argmax}} \; \sum_{i=1}^n \left\{ \bm{r}_{\boldsymbol{\theta}}(\bm{x}_i,\bm{x}_i) - \sum_{j = 1, j \neq i}^n \bm{r}_{\boldsymbol{\theta}}(\bm{x}_i,\bm{x}_j) \right\}.
\label{eq:entapprox}
\end{equation}
Assuming the kernel $r_{\boldsymbol{\theta}}$ satisfies some bounded condition $\underline{r} \equiv \inf_{\bm{x}\in[0,1]^p}r_{\boldsymbol{\theta}}(\bm{x},\bm{x})$, the objective in \eqref{eq:entapprox} can then be lower bounded as:
\begin{equation}
\sum_{i=1}^n \left\{ 2\bm{r}_{\boldsymbol{\theta}}(\bm{x}_i,\bm{x}_i) - \sum_{j = 1}^n \bm{r}_{\boldsymbol{\theta}}(\bm{x}_i,\bm{x}_j) \right\} \geq 2n\underline{r} - \sum_{i=1}^n \sum_{j=1}^n \bm{r}_{\boldsymbol{\theta}}(\bm{x}_i,\bm{x}_j).
\end{equation}
Using this lower bound as a surrogate objective for the right-hand side of \eqref{eq:entapprox}, the maximum entropy designs in \eqref{eq:entapprox} can be well-approximated by designs which \textit{minimize} the sum of all entries in $\bm{R}_n$, i.e.:
\begin{equation}
\underset{\bm{x}_1, \cdots, \bm{x}_n}{\textup{Argmax}} \; \det(\bm{R}_n) \approx  \underset{\bm{x}_1, \cdots, \bm{x}_n}{\textup{Argmin}} \; \sum_{i=1}^n \sum_{j = 1}^n \bm{r}_{\boldsymbol{\theta}}(\bm{x}_i,\bm{x}_j).
\label{eq:entapprox2}
\end{equation}

Since the interest lies in estimating the \textit{variation} term $\mathcal{P}_{\mathcal{G}}^{\bot} f$ (which, by Lemma \ref{lem:projkern}, follows a zero-mean GP with correlation kernel $\bm{r}_{\boldsymbol{\theta}, \mathcal{G}}$), it makes sense to replace the original kernel $\bm{r}_{\boldsymbol{\theta}}$ in \eqref{eq:entapprox2} with the \textit{projected} kernel $\bm{r}_{\boldsymbol{\theta}, \mathcal{G}}$. The following theorem shows that the entry-wise sum in \eqref{eq:entapprox2} using $\bm{r}_{\boldsymbol{\theta}, \mathcal{G}}$ is proportional to the energy distance $\mathcal{E}(F,F_n)$:
\begin{theorem}[Information on $\mathcal{P}_{\mathcal{G}}^{\bot} f$]
Suppose $r_{\boldsymbol{\theta}}(\bm{x},\bm{y}) \propto \|\bm{x}\|_2 + \|\bm{y}\|_2 - \|\bm{x}-\bm{y}\|_2$. For any design $\mathcal{D} = \{\bm{x}_i\}_{i=1}^n$, we have:
\begin{equation}
\sum_{i=1}^n \sum_{j=i}^n r_{\boldsymbol{\theta},\mathcal{G}}(\bm{x}_i, \bm{x}_j) \propto \mathcal{E}(F,F_n).
\label{eq:sum}
\end{equation}
\label{thm:sum}
\end{theorem}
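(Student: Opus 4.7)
The plan is to unfold the definition of the projected kernel $r_{\boldsymbol{\theta},\mathcal{G}}$ from Lemma \ref{lem:projkern} using the specific form $r_{\boldsymbol{\theta}}(\bm{x},\bm{y}) \propto \|\bm{x}\|_2 + \|\bm{y}\|_2 - \|\bm{x}-\bm{y}\|_2$, and then observe that after plugging in, most of the terms cancel, leaving a pointwise expression whose double sum is exactly the energy distance (up to a constant). No deep machinery is needed; the work is careful bookkeeping.

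First I would compute the three expectations that appear on the right-hand side of \eqref{eq:projkern} under the given kernel. With $\bm{Y},\bm{Y}' \stackrel{i.i.d.}{\sim} \mathcal{U}[0,1]^p$, one gets
\begin{align*}
\mathbb{E}\{r_{\boldsymbol{\theta}}(\bm{x},\bm{Y})\} &\propto \|\bm{x}\|_2 + \mathbb{E}\|\bm{Y}\|_2 - \mathbb{E}\|\bm{x}-\bm{Y}\|_2,\\
\mathbb{E}\{r_{\boldsymbol{\theta}}(\bm{Y},\bm{y})\} &\propto \mathbb{E}\|\bm{Y}\|_2 + \|\bm{y}\|_2 - \mathbb{E}\|\bm{Y}-\bm{y}\|_2,\\
\mathbb{E}\{r_{\boldsymbol{\theta}}(\bm{Y},\bm{Y}')\} &\propto 2\,\mathbb{E}\|\bm{Y}\|_2 - \mathbb{E}\|\bm{Y}-\bm{Y}'\|_2.
\end{align*}
Substituting these into \eqref{eq:projkern}, the $\|\bm{x}\|_2$ and $\|\bm{y}\|_2$ pieces each cancel with their counterparts from the first subtracted expectation, and the three $\mathbb{E}\|\bm{Y}\|_2$ contributions combine to zero ($-1-1+2=0$). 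What remains is
\[
r_{\boldsymbol{\theta},\mathcal{G}}(\bm{x},\bm{y}) \propto -\|\bm{x}-\bm{y}\|_2 + \mathbb{E}\|\bm{x}-\bm{Y}\|_2 + \mathbb{E}\|\bm{Y}-\bm{y}\|_2 - \mathbb{E}\|\bm{Y}-\bm{Y}'\|_2.
\]
This is the key intermediate identity; everything after is summation bookkeeping.

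Next I would sum this expression over $i,j = 1,\dots,n$ (the symmetry of $r_{\boldsymbol{\theta},\mathcal{G}}$ makes the upper-triangular sum in the statement differ from the full double sum only by a constant and a diagonal contribution, so proportionality is preserved). The double sum yields
\[
\sum_{i,j=1}^n r_{\boldsymbol{\theta},\mathcal{G}}(\bm{x}_i,\bm{x}_j) \propto -\sum_{i,j}\|\bm{x}_i-\bm{x}_j\|_2 + 2n\sum_{i}\mathbb{E}\|\bm{x}_i-\bm{Y}\|_2 - n^2\,\mathbb{E}\|\bm{Y}-\bm{Y}'\|_2,
\]
using that $\sum_i \mathbb{E}\|\bm{x}_i-\bm{Y}\|_2$ and $\sum_j \mathbb{E}\|\bm{Y}-\bm{x}_j\|_2$ coincide. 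Factoring out $n^2$ and comparing with \eqref{eq:energy} under $F_n$ in place of $G$ gives $n^2\,\mathcal{E}(F,F_n)$, which proves the claim.

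The only subtlety worth flagging is the diagonal $\{i=j\}$ contribution, since $r_{\boldsymbol{\theta},\mathcal{G}}(\bm{x}_i,\bm{x}_i)$ is a nonzero constant under the given kernel; this constant is absorbed into the $\propto$ symbol and does not affect the conclusion. Beyond that, there is no genuine obstacle — the proof is essentially the observation that applying the projection \eqref{eq:projkern} to the conditionally negative definite kernel $\|\bm{x}\|_2 + \|\bm{y}\|_2 - \|\bm{x}-\bm{y}\|_2$ produces exactly the pointwise kernel whose average against $F_n \otimes F_n$ is the energy distance.
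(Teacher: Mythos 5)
Your proof is correct and follows essentially the same route as the paper's: the paper sums first and rewrites $\sum_{i,j} r_{\boldsymbol{\theta},\mathcal{G}}(\bm{x}_i,\bm{x}_j)$ as $n^2\int\int r_{\boldsymbol{\theta}}(\bm{x},\bm{y})\,d[F-F_n](\bm{x})\,d[F-F_n](\bm{y})$ and uses the zero total mass of $F-F_n$ to kill the $\|\bm{x}\|_2$ and $\|\bm{y}\|_2$ terms, whereas you perform the identical cancellation pointwise at the kernel level before summing -- the same computation in a different order. One small correction: your remark that the diagonal contribution $r_{\boldsymbol{\theta},\mathcal{G}}(\bm{x}_i,\bm{x}_i)$ is a constant is not quite right, since it equals $2\mathbb{E}\|\bm{x}_i-\bm{Y}\|_2 - \mathbb{E}\|\bm{Y}-\bm{Y}'\|_2$ (up to the proportionality factor) and hence depends on $\bm{x}_i$; this only bears on reconciling the full double sum with the triangular sum $\sum_{j\geq i}$ written in the theorem statement -- a discrepancy the paper's own proof also glosses over by silently switching to $\sum_{j=1}^n$ in its first line -- and does not affect the validity of your main computation.
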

\vspace{-1.0cm}
\begin{proof}
Suppose $r_{\boldsymbol{\theta}}(\bm{x},\bm{y}) \propto \|\bm{x}\|_2 + \|\bm{y}\|_2 - \|\bm{x}-\bm{y}\|_2$, and take an arbitrary design $\mathcal{D} = \{\bm{x}_i\}_{i=1}^n \subseteq [0,1]^p$, with $F_n$ its empirical distribution function. Letting $F = \mathcal{U}[0,1]^p$, it follows that:
\small
\begin{align*}
\sum_{i=1}^n \sum_{j=i}^n r_{\boldsymbol{\theta},\mathcal{G}}(\bm{x}_i, \bm{x}_j) &= \sum_{i=1}^n \sum_{j=1}^n \left[ r_{\boldsymbol{\theta}}(\bm{x}_i,\bm{x}_j) - \mathbb{E}\{ r_{\boldsymbol{\theta}}(\bm{x}_i,\bm{Y}) \} - \mathbb{E}\{r_{\boldsymbol{\theta}}(\bm{Y},\bm{x}_j) \} + \mathbb{E}\{ r_{\boldsymbol{\theta}}(\bm{Y},\bm{Y}')\} \right] \notag \\
& = \text{\mbox{ \footnotesize $n^2 \Bigg[ \int_{[0,1]^p} \int_{[0,1]^p} r_{\boldsymbol{\theta}}(\bm{x},\bm{y}) \; dF_n(\bm{x}) dF_n(\bm{y}) - \int_{[0,1]^p} \int_{[0,1]^p} r_{\boldsymbol{\theta}}(\bm{x},\bm{y}) \; dF_n(\bm{x})dF(\bm{y}) $}}  \notag\\
& \quad \text{\mbox{\footnotesize $- \int_{[0,1]^p} \int_{[0,1]^p} r_{\boldsymbol{\theta}}(\bm{x},\bm{y}) \; dF(\bm{x}) dF_n(\bm{y}) + \int_{[0,1]^p} \int_{[0,1]^p} r_{\boldsymbol{\theta}}(\bm{x},\bm{y}) \; dF(\bm{x})dF(\bm{y}) \Bigg]$}} \notag\\
&= n^2\int_{[0,1]^p} \int_{[0,1]^p} r_{\boldsymbol{\theta}}(\bm{x},\bm{y}) \; d[F-F_n](\bm{x}) d[F-F_n](\bm{y}) \notag\\
&\propto \int_{[0,1]^p} \int_{[0,1]^p} \left( \|\bm{x}\|_2 + \|\bm{y}\|_2 - \|\bm{x}-\bm{y}\|_2 \right) \; d[F-F_n](\bm{x}) d[F-F_n](\bm{y}) \notag\\
&= \int_{[0,1]^p} \int_{[0,1]^p} - \|\bm{x}-\bm{y}\|_2 \; d[F-F_n](\bm{x}) d[F-F_n](\bm{y}) \notag\\
&= \text{\mbox{\footnotesize $2\int_{[0,1]^p} \int_{[0,1]^p} \|\bm{x}-\bm{y}\|_2 \; dF(\bm{x}) dF_n(\bm{y}) - \int_{[0,1]^p} \int_{[0,1]^p} \|\bm{x}-\bm{y}\|_2 \; dF_n(\bm{x})dF_n(\bm{y}) $}} \notag\\
&  \quad \text{\mbox{\footnotesize $- \int_{[0,1]^p} \int_{[0,1]^p} \|\bm{x}-\bm{y}\|_2 \; dF(\bm{x}) dF(\bm{y})$}}\\
&= \mathcal{E}(F,F_n), \notag
\end{align*}
\normalsize
where the third-last step follows because $\int_{[0,1]^p} \|\bm{x}\|_2 \; d[F-F_n](\bm{y}) $ = $\int_{[0,1]^p} \|\bm{y}\|_2 \; d[F-F_n](\bm{x}) = 0$. This completes the proof.
\end{proof}

Theorem \ref{thm:sum} can be interpreted as follows. It is known (see \cite{ZA2014}) that the correlation assumption $r_{\boldsymbol{\theta}}(\bm{x},\bm{y}) \propto \|\bm{x}\|_2 + \|\bm{y}\|_2 - \|\bm{x}-\bm{y}\|_2$ corresponds to the expected $1/2$-H\"{o}lder continuity condition $\mathbb{E}[\{f(\bm{x}) - f(\bm{y})\}^2] = \mathcal{O}(\|\bm{x}-\bm{y}\|_2)$ for the underlying GP sample paths. This condition is quite general and intuitive, because one would anticipate the expected \textit{variation} of the response surface to be proportional to the \textit{distance} between input settings. Under such an assumption, Theorem \ref{thm:sum} and \eqref{eq:entapprox2} show that a design with \textit{smaller} energy distance to $\mathcal{U}[0,1]^p$ can provide \textit{more} information on the variation term $\mathcal{P}_{\mathcal{G}}^{\bot} f (\bm{x})$. By minimizing $\mathcal{E}(F,F_n)$, SPs can yield a near-maximal information gain on $\mathcal{P}_{\mathcal{G}}^{\bot} f (\bm{x})$, which translates to good predictions of the variation of $f$ around its mean.

\begin{figure}
\small
\centering
\begin{tabular}{c " c " c }
\specialrule{2pt}{1pt}{1pt}
\textbf{Term} & \textbf{Assumptions on $f$} & \textbf{Optimality of SPs}\\
\specialrule{2pt}{1pt}{1pt}
Mean term & $f \in W_{\lceil (p+1)/2 \rceil,2}$ & Optimizes error upper bound\\
$\mathcal{P}_{\mathcal{G}} f$ & ($\mathcal{L}_2$-integrable $\lceil (p+1)/2 \rceil$-th & via Koksma-Hlawka \\
 & differentials) & (Theorem \ref{thm:interr})\\
\hline
Variation term & $\mathbb{E}[\{f(\bm{x}) - f(\bm{y})\}^2] = \mathcal{O}(\|\bm{x}-\bm{y}\|_2)$ & Near-maximal information\\
$\mathcal{P}_{\mathcal{G}}^{\bot} f(\bm{x})$ & (Expected $1/2$-H\"{o}lder continuity) & gain from design\\
& & (Theorem \ref{thm:sum}) \\
\specialrule{2pt}{1pt}{1pt}
\end{tabular}
\captionof{table}{Assumptions and optimality of SPs for predicting $\mathcal{P}_{\mathcal{G}} f$ and $\mathcal{P}_{\mathcal{G}}^{\bot} f(\bm{x})$.}
\label{tbl:robust}
\normalsize
\end{figure}

Table \ref{tbl:robust} summarizes the above results for (a) the assumptions on the response surface $f$, and (b) the optimality offered by SPs under such assumptions. We see that this class of response surfaces for SPs is indeed general; all that is assumed is the existence of integrable differentials and an expected H\"{o}lder continuity condition. Because of this generality, SPs can be viewed as \textit{robust} computer experiment designs, yielding good GP emulation performance over a large class of functions. This is in contrast to the model-specific integrated mean-squared error designs \citep{Sea1989a}, which require specification of both the correlation $r_{\boldsymbol{\theta}}(\cdot,\cdot)$ and its parameters $\boldsymbol{\theta}$, and are therefore not robust for modeling different response surfaces. Of course, one way of circumventing this is to assume a prior structure on $\boldsymbol{\theta}$ and adopt a Bayesian form of the design criterion (see, e.g., \citep{Pea2017}), but such an approach is more computationally demanding, and is also sensitive to the choice of prior specification. Compared to these methods, SPs provide the desired robustness property, and can also be efficiently generated for large designs in high-dimensions via difference-of-convex programming (see Section \ref{sec:alg}).

To illustrate this robustness, consider the emulation of two simple 1-d test functions: the scaled Runge function $f_1(x) = 1/(1+100(x-0.5)^2)$, and the modified step function $f_2(x) = 0.25 + 0.5(x-a)/(b-a) \cdot \mathbf{1}\{a \leq x < b\} + 0.5 \cdot \mathbf{1}\{x \geq b\}$, with $a = 0.4$ and $b = 0.6$ (this modification is made to satisfy the H\"{o}lder condition in Table \ref{fig:robust}). Figure \ref{fig:robust} shows the GP fits of these functions using the Gaussian correlation $\exp\{- 10 (x - y)^2\}$, with $n=7$ SPs and Chebyshev nodes. For the smooth Runge function, SPs provide comparable (but slightly worse) performance to Chebyshev nodes, which is unsurprising since the latter is well-known for being optimal interpolant points for the Runge function \citep{Tre2013}. For the non-smooth step function, however, SPs enjoy considerably improved predictions to Chebyshev nodes. This nicely demonstrates the \textit{robustness} of SPs for modeling both smooth and non-smooth functions in 1-d; we show later in Sections \ref{sec:eff} and \ref{sec:app} that the same robustness also holds for higher-dimensional functions.


\begin{figure}[t]
\centering
\includegraphics[width=0.75\textwidth]{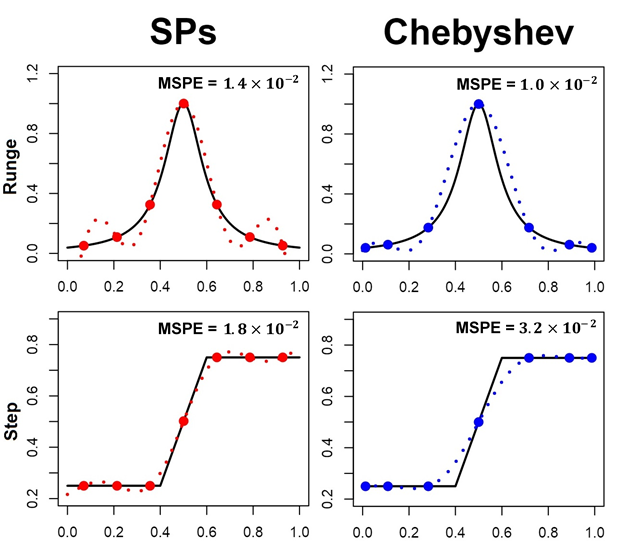}
\caption{GP fits of the scaled Runge function and the modified step function using $n = 7$ SPs and Chebyshev nodes. Solid lines indicate true functions, dotted lines indicate predictions, dots indicate observed function points, and MSPE indicates mean-squared prediction error.}
\label{fig:robust}
\end{figure}

\subsection{Support points and fractional Brownian motion modeling}
\label{sec:fbm}
The correlation function in Theorem \ref{thm:sum} also has a useful connection to a special class of GPs called the \textit{fractional Brownian motion} (fBm), which we define below:
\begin{definition}[Fractional Brownian motion; \cite{Man1985}]
Let $q \in (0,2)$, and let $Z(\bm{x})$ be a stochastic process with the following properties: (a) $Z(\bm{x})$ is a Gaussian process, (b) $Z(\bm{0}) = 0$ almost surely, (c) $\mathbb{E}[Z(\bm{x}) - Z(\bm{y})] = 0$, and (d) $\textup{Cov\{$Z(\bm{x}),Z(\bm{y})$\}} = \frac{\sigma^2}{2}(\|\bm{x}\|_2^q + \|\bm{y}\|_2^q - \|\bm{x}-\bm{y}\|_2^{q})$. Then $Z(\bm{x})$ is called a \textup{fractional Brownian motion} with \textup{Hurst index} $H = q/2$. We denote this as $Z(\bm{x}) \sim \textup{fBm}(H)$.
\label{def:fbm}
\end{definition}
\noindent With $q=1$ (or $H = 1/2$), $\textup{fBm}(q/2)$ reduces to the standard Brownian motion, and its correlation function reduces to same form assumed in Theorem \ref{thm:sum}. The Hurst index $H = q/2$ plays an important role in controlling the dependency structure of $Z(\bm{x})$: $H>1/2$ indicates positively-correlated increments of $Z$, $H=1/2$ indicates independent increments, and $H<1/2$ indicates negatively-correlated increments. It can be shown \citep{Nua2006} that $|Z(\bm{x}) - Z(\bm{y})| = \mathcal{O}_P(\|\bm{x}-\bm{y}\|_2^{H-\epsilon})$ for any $\epsilon > 0$, so a \textit{smaller} choice of $H$ results in more \textit{volatile} sample paths for $Z(\bm{x})$.

The use of fBm's for computer experiment emulation was first proposed in \cite{ZA2014}, although such models have been employed earlier for intrinsic kriging in the spatial statistics literature. Following \cite{ZA2014}, the non-stationary fBm's enjoy three attractive emulation properties compared to traditional, stationary GPs: they avoid predictions which revert to the mean, ameliorate numerical instabilities from the near-singularity of $\bm{R}_n$, and can better handle abrupt function changes. In other words, these three properties highlight the \textit{robustness} of the fBm model for emulating a broader class of response surfaces. In light of Theorem \ref{thm:sum} and its connection to $\textup{fBm}(1/2)$, the SPs proposed here can then be viewed as optimal information-theoretic designs which \textit{amplify} these robustness properties of fBm emulation. 

\section{Support points as a trade-off between minimax and maximin}
\label{sec:minmax}
Next, we reveal several novel insights which reconcile support points with minimax and maximin designs, two popular designs used in computer experiment emulation. Using the generalized energy distance and the fractional Brownian motion from the previous section, we show that support points can be viewed as a limiting compromise between minimax and maximin designs, with greater weight on the former.

\subsection{The generalized energy distance and fractional Brownian motion}
\label{sec:gen}
Consider first the following generalization of the energy distance:

\begin{definition}[$q$-energy distance; \cite{SZ2013}]
Let $q \in (0,2)$, and adopt the same notation as in Definition \ref{def:energy}, with $\mathbb{E}\|\bm{X}\|^q < \infty$ and $\mathbb{E}\|\bm{Y}\|^q < \infty$. The \textup{$q$-energy distance} between $F$ and $G$ is defined as:
\begin{equation}
\mathcal{E}_q(F,G) \equiv 2\mathbb{E}\|\bm{X}-\bm{Y}\|_2^q - \mathbb{E}\|\bm{Y} - \bm{Y}'\|_2^q - \mathbb{E}\|\bm{X} - \bm{X'}\|_2^q.
\label{eq:gen}
\end{equation}
\label{def:gen}
\end{definition}
\vspace{-0.7cm}
\noindent With $q = 1$, one recovers the original energy distance in Definition \ref{def:energy}. This generalized energy distance was originally proposed in \cite{SZ2013} for testing goodness-of-fit of samples from heavy-tailed distributions.

The following theorem outlines an illuminating connection between the $q$-energy distance and the integration error of functions from the fractional Brownian motion $\textup{fBm}(q/2)$:
\begin{theorem}[Integration error and $q$-energy distance]
Let $\{\bm{x}_i\}_{i=1}^n \subseteq [0,1]^p$ be a point set with corresponding e.d.f. $F_n$. If an integrand $f$ is drawn from $\textup{fBm}(q/2)$ for some $q \in (0,2)$, then:
\begin{equation}
\mathbb{E}[I^2(f;F,F_n)] = \frac{\sigma^2}{2} \mathcal{E}_q(F,F_n).
\label{eq:kh}
\end{equation}
\label{thm:kh}
\end{theorem}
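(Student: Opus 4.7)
My plan is to expand the squared integration error as a double integral against the signed measure $F - F_n$, commute expectation with the integrals via Fubini, substitute the fBm covariance, and recognize what remains as $\mathcal{E}_q(F, F_n)$.

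First, I would rewrite the integration error. Since squaring removes the absolute value, I can set $J \equiv \mathcal{P}_{\mathcal{G}} f - \int_{[0,1]^p} f \, dF_n = \int_{[0,1]^p} f(\bm{x}) \, d[F - F_n](\bm{x})$, so that $I^2(f;F,F_n) = J^2$. Writing $J^2$ as a product of two integrals gives
\begin{equation*}
I^2(f;F,F_n) = \int_{[0,1]^p}\int_{[0,1]^p} f(\bm{x}) f(\bm{y}) \, d[F - F_n](\bm{x}) \, d[F - F_n](\bm{y}).
\end{equation*}
Taking expectations and interchanging them with the (finite, signed) integrals (justified by $f \sim \text{fBm}(q/2)$ having $\mathbb{E}|f(\bm{x})f(\bm{y})| < \infty$ uniformly on the bounded domain $[0,1]^p$, and by boundedness of the domain so Fubini-Tonelli applies), I obtain
\begin{equation*}
\mathbb{E}[I^2(f;F,F_n)] = \int_{[0,1]^p}\int_{[0,1]^p} \mathbb{E}[f(\bm{x}) f(\bm{y})] \, d[F - F_n](\bm{x}) \, d[F - F_n](\bm{y}).
\end{equation*}

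Next I substitute the fBm covariance. Since $f \sim \text{fBm}(q/2)$ has mean zero, $\mathbb{E}[f(\bm{x})f(\bm{y})] = \text{Cov}\{f(\bm{x}), f(\bm{y})\} = \frac{\sigma^2}{2}(\|\bm{x}\|_2^q + \|\bm{y}\|_2^q - \|\bm{x} - \bm{y}\|_2^q)$ by Definition \ref{def:fbm}. The key simplification is that $F - F_n$ is a signed measure of total mass zero, so for any integrable univariate function $g$,
\begin{equation*}
\int_{[0,1]^p}\int_{[0,1]^p} g(\bm{x}) \, d[F - F_n](\bm{x}) \, d[F - F_n](\bm{y}) = \left(\int g \, d[F-F_n]\right) \cdot 0 = 0.
\end{equation*}
Applying this to $g(\bm{x}) = \|\bm{x}\|_2^q$ and symmetrically to $\|\bm{y}\|_2^q$ eliminates both one-variable terms, leaving
\begin{equation*}
\mathbb{E}[I^2(f;F,F_n)] = -\frac{\sigma^2}{2} \int_{[0,1]^p}\int_{[0,1]^p} \|\bm{x} - \bm{y}\|_2^q \, d[F-F_n](\bm{x}) \, d[F-F_n](\bm{y}).
\end{equation*}

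Finally I expand the double signed-measure integral into its four pieces against $F \otimes F$, $F \otimes F_n$, $F_n \otimes F$, and $F_n \otimes F_n$. With $\bm{X}, \bm{X}' \sim F_n$ i.i.d.\ and $\bm{Y}, \bm{Y}' \sim F$ i.i.d., these four pieces evaluate to $\mathbb{E}\|\bm{Y}-\bm{Y}'\|_2^q$, $\mathbb{E}\|\bm{Y}-\bm{X}\|_2^q$, $\mathbb{E}\|\bm{X}-\bm{Y}\|_2^q$, and $\mathbb{E}\|\bm{X}-\bm{X}'\|_2^q$ respectively, and combining with the minus sign yields exactly $\frac{\sigma^2}{2}\{2\mathbb{E}\|\bm{X}-\bm{Y}\|_2^q - \mathbb{E}\|\bm{Y}-\bm{Y}'\|_2^q - \mathbb{E}\|\bm{X}-\bm{X}'\|_2^q\} = \frac{\sigma^2}{2}\mathcal{E}_q(F, F_n)$ by Definition \ref{def:gen}, completing the proof.

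The argument is essentially a direct computation once the mean-zero property of $F - F_n$ is invoked, so I do not expect a serious obstacle. The only delicate point is justifying the Fubini interchange between expectation and the signed integrals; this is routine here because $[0,1]^p$ is bounded and the fBm covariance is continuous, so $\mathbb{E}|f(\bm{x})f(\bm{y})|$ is uniformly bounded on the compact product domain. This parallels the reasoning for the proof of Theorem \ref{thm:sum}, which is structurally the $q=1$ specialization of this identity applied inside the maximum-entropy surrogate objective.
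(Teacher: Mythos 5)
Your proposal is correct and follows essentially the same route as the paper's proof: expand $I^2$ as a double integral against $d[F-F_n]$, interchange expectation with integration over the bounded domain, substitute the fBm covariance from Definition \ref{def:fbm}(d), annihilate the single-variable terms using the zero total mass of $F-F_n$, and expand the remaining term into the four pieces of $\mathcal{E}_q(F,F_n)$. The only cosmetic difference is that you explicitly justify $\mathbb{E}[f(\bm{x})f(\bm{y})]=\mathrm{Cov}\{f(\bm{x}),f(\bm{y})\}$ via the mean-zero property, which the paper leaves implicit.
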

\vspace{-1.0cm}
\begin{proof}
Suppose $f$ is drawn from the fractional Brownian motion $\textup{fBm}(q/2)$ for some $q \in (0,2)$, and consider the following expansion of the expected squared-error $\mathbb{E}[I^2(f;F,F_n)]$:
\small
\begin{align*}
\mathbb{E}[I^2(f;F,F_n)] &= \mathbb{E}\left[ \left( \int_{[0,1]^p} f(\bm{x}) \; d[F-F_n](\bm{x}) \right)^2 \right]\\
&= \mathbb{E}\left[ \int_{[0,1]^p} \int_{[0,1]^p} f(\bm{x}) f(\bm{y}) \; d[F-F_n](\bm{x}) \; d[F-F_n](\bm{y}) \right]\\
&= \int_{[0,1]^p} \int_{[0,1]^p} \mathbb{E}[f(\bm{x}) f(\bm{y})] \; d[F-F_n](\bm{x}) \; d[F-F_n](\bm{y}) \quad \text{(by BCT)}\\
&= \int_{[0,1]^p} \int_{[0,1]^p} \frac{\sigma^2}{2} \left( \|\bm{x}\|_2^q + \|\bm{y}\|_2^q  - \|\bm{x}-\bm{y}\|_2^q\right) \; d[F-F_n](\bm{x}) \; d[F-F_n](\bm{y})\\
& \hspace{7cm} \text{(by Definition \ref{def:fbm} (d))}\\
&= - \frac{\sigma^2}{2}\int_{[0,1]^p} \int_{[0,1]^p} \|\bm{x}-\bm{y}\|_2^q \; d[F-F_n](\bm{x}) \; d[F-F_n](\bm{y}) \\
& \hspace{0.5cm} \text{(since $\int_{[0,1]^p} \|\bm{x}\|_2^q \; d[F-F_n](\bm{y}) = \int_{[0,1]^p} \|\bm{y}\|_2^q \; d[F-F_n](\bm{x}) = 0$)}\\
&= \text{\mbox{\footnotesize $- \frac{\sigma^2}{2} \Bigg( \int_{[0,1]^p} \int_{[0,1]^p} \|\bm{x}-\bm{y}\|_2^q \; dF(\bm{x}) \; dF(\bm{y})  - 2 \int_{[0,1]^p} \int_{[0,1]^p} \|\bm{x}-\bm{y}\|_2^q \; dF(\bm{x}) \; dF_n(\bm{y}) $}} \\
&  \quad \quad \text{\mbox{\footnotesize $+ \int_{[0,1]^p} \int_{[0,1]^p} \|\bm{x}-\bm{y}\|_2^q \; dF_n(\bm{x}) \; dF_n(\bm{y}) \Bigg)$}}\\
&= \frac{\sigma^2}{2} \left( 2\mathbb{E}\|\bm{X}-\bm{Y}\|_2^q - \mathbb{E}\|\bm{Y} - \bm{Y}'\|_2^q - \mathbb{E}\|\bm{X} - \bm{X}'\|_2^q \right) \\
& \hspace{4cm} \text{(where $\bm{X},\bm{X}' \distas{i.i.d.} F_n$ and $\bm{Y},\bm{Y}' \distas{i.i.d.} F$)}\\
&= \frac{\sigma^2}{2} \mathcal{E}_q(F,F_n),
\end{align*}
\normalsize
which is as desired.
\end{proof}

\noindent In other words, when the integrand $f$ follows a fractional Brownian motion with Hurst index $q/2$, the expected squared integration error is proportional to the $q$-energy distance between $F_n$ and $F$. With $q=1$, Proposition \ref{thm:kh} shows that this expected squared error is proportional to the original energy distance in Definition \ref{def:energy}. Support points, by minimizing the latter, is therefore optimal for integrating sample paths from the standard Brownian motion. Proposition \ref{thm:kh} will be useful in the next section for comparing the function spaces of SPs with that for minimax and maximin designs.


\subsection{Minimax and maximin designs}
\label{sec:mm}
Before discussing the connections between SPs, minimax and maximin designs, we briefly review the latter two distance-based designs:
\begin{definition}[Minimax and maximin designs; \cite{Jea1990}]
For a design $\mathcal{D} = \{\bm{x}_i\}_{i=1}^n \subseteq [0,1]^p$, let $Q(\bm{x},\mathcal{D}) = \underset{\bm{z} \in \mathcal{D}}{\textup{Argmin}} \; \|\bm{x} - \bm{z}\|_2$ denote the closest design point to some point $\bm{x} \in [0,1]^p$. A \textup{minimax-distance design} on $[0,1]^p$ (or \textup{minimax design} for short) is defined as:
\begin{equation}
\underset{\mathcal{D} \subseteq [0,1]^p}{\textup{Argmin}} \; \textup{mM}(\mathcal{D}) \equiv \underset{\mathcal{D} \subseteq [0,1]^p}{\textup{Argmin}} \sup_{\bm{x} \in [0,1]^p} \| \bm{x} - Q(\bm{x}, \mathcal{D})\|_2.
\label{eq:minimax}
\end{equation}
A \textup{maximin-distance design} on $[0,1]^p$ (or \textup{maximin design}) is defined as:
\begin{equation}
\underset{\mathcal{D} \subseteq [0,1]^p}{\textup{Argmax}} \; \textup{Mm}(\mathcal{D}) \equiv \underset{\mathcal{D} \subseteq [0,1]^p}{\textup{Argmax}} \min_{i, j} \|\bm{x}_i - \bm{x}_j\|_2.
\label{eq:maximin}
\end{equation}
\end{definition}
\noindent In words, minimax designs \textit{minimize} the \textit{maximum} distance from any point in the design space $[0,1]^p$ to its closest design point, while maximin designs \textit{maximize} the \textit{minimum} distance between any two design points. These designs therefore provide a uniform coverage on $[0,1]^p$ in different ways: a minimax design ensures that the design space is \textit{sufficiently close} to the design, whereas a maximin design ensures design points are \textit{sufficiently spread out} over the design space. Both designs are widely used in many engineering applications, such as computer experiments \citep{Sea2013} and sensor allocation \citep{MJ2017a}.

With this in hand, the following proposition reveals an important connection between the support points for $\mathcal{U}[0,1]^p$ and these two distance-based designs:
\begin{proposition}[SPs as a weighted minimax and maximin design] 
Let $\mathcal{D} = \{\bm{x}_i\}_{i=1}^n \subseteq [0,1]^p$ be a design with e.d.f. $F_n$. For a positive function $h: \Omega \rightarrow \mathbb{R}_{+}$ with $\Omega \subseteq \mathbb{R}^p$ bounded, note that for sufficiently small $q > 0$:
\ben[(a)]
\item If $\Omega$ is measurable with $\textup{Vol}(\Omega) > 0$\footnote{$\textup{Vol}(\Omega)$ denotes the volume of $\Omega$ under Lebesgue measure.}, then:
\begin{equation}
h^q(\bm{z}) \approx \frac{1}{\textup{Vol}(\Omega)} \int_{\Omega} h^q(\bm{z}') \; d\bm{z}' \quad \text{for any $\bm{z} \in \Omega$,}
\label{eq:approxa}
\end{equation}
\item If $\Omega = \{\bm{z}_i\}_{i=1}^m$ is a finite set of points, then:
\begin{equation}
h^q(\bm{z}) \approx \frac{1}{m}\sum_{i=1}^m h^q(\bm{z}_i) \quad \text{for any $\bm{z} \in \Omega$.}
\label{eq:approxb}
\end{equation}
\een
Under such approximations, the $q$-energy between $F$ and $F_n$ becomes:
\begin{equation}
\mathcal{E}_q(F,F_n) \approx 2[\textup{mM}(\mathcal{D})]^q - [\textup{Mm}(\mathcal{D})]^q - \mathbb{E} \|\bm{Y} - \bm{Y}'\|_2^q, \quad \bm{Y}, \bm{Y}' \distas{i.i.d.} F,
\label{eq:wtmm}
\end{equation}
for sufficiently small $q>0$.
\label{thm:dist}
\end{proposition}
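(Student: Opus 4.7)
The plan is to decompose $\mathcal{E}_q(F,F_n) = 2\mathbb{E}\|\bm{X}-\bm{Y}\|_2^q - \mathbb{E}\|\bm{Y}-\bm{Y}'\|_2^q - \mathbb{E}\|\bm{X}-\bm{X}'\|_2^q$ (with $\bm{X},\bm{X}' \distas{i.i.d.} F_n$ and $\bm{Y},\bm{Y}' \distas{i.i.d.} F$) and then approximate the first and third terms using parts (a) and (b); the middle term $\mathbb{E}\|\bm{Y}-\bm{Y}'\|_2^q$ already appears in the target expression and requires no manipulation. The two approximations rest on a single common observation that justifies the step ``for sufficiently small $q$'': since each relevant $h$ is bounded and positive on a bounded set, $h^q \to 1$ uniformly as $q\to 0^+$, so $h^q(\bm{z})$ is approximately equal to any average, maximum, or minimum of $h^q$ over $\Omega$. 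The freedom to choose which value of $\bm{z}$ to substitute is exactly what lets us surface $\textup{mM}(\mathcal{D})$ and $\textup{Mm}(\mathcal{D})$ on the right-hand side.

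For the first term, I would write $\mathbb{E}\|\bm{X}-\bm{Y}\|_2^q = \int_{[0,1]^p}\bigl(\tfrac{1}{n}\sum_{i=1}^n \|\bm{x}_i-\bm{y}\|_2^q\bigr)\,d\bm{y}$. The inner sum is the approximation-(b) average of $h_{\bm{y}}^q$ for $h_{\bm{y}}(\bm{z})=\|\bm{z}-\bm{y}\|_2$ over $\Omega=\mathcal{D}$, which I specialize at $\bm{z}=Q(\bm{y},\mathcal{D})$ to get $\tfrac{1}{n}\sum_i \|\bm{x}_i-\bm{y}\|_2^q \approx \|Q(\bm{y},\mathcal{D})-\bm{y}\|_2^q$. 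Integrating over $\bm{y}\in[0,1]^p$ (which has unit Lebesgue volume) and invoking approximation (a) with $h(\bm{y})=\|Q(\bm{y},\mathcal{D})-\bm{y}\|_2$ and $\Omega=[0,1]^p$, specialized at the maximizer $\bm{y}^\ast = \arg\sup_{\bm{y}} h(\bm{y})$, yields $\mathbb{E}\|\bm{X}-\bm{Y}\|_2^q \approx [\textup{mM}(\mathcal{D})]^q$.

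For the third term, I would write $\mathbb{E}\|\bm{X}-\bm{X}'\|_2^q = \tfrac{1}{n}\sum_{i=1}^n\bigl(\tfrac{1}{n}\sum_{j=1}^n\|\bm{x}_j-\bm{x}_i\|_2^q\bigr)$ and apply approximation (b) twice. First, for fixed $i$, apply (b) with $\Omega=\mathcal{D}$ and $h(\bm{z})=\|\bm{z}-\bm{x}_i\|_2$, specialized at the nearest neighbor $\bm{x}_{k(i)}$ where $k(i)=\arg\min_{j\ne i}\|\bm{x}_j-\bm{x}_i\|_2$, so that $\tfrac{1}{n}\sum_{j}\|\bm{x}_j-\bm{x}_i\|_2^q \approx \min_{j\ne i}\|\bm{x}_j-\bm{x}_i\|_2^q$. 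Second, apply (b) to the outer average of these minima by taking $\Omega=\mathcal{D}$ and $h(\bm{x}_i)=\min_{j\ne i}\|\bm{x}_j-\bm{x}_i\|_2$, specialized at the global minimizer which achieves $\textup{Mm}(\mathcal{D})$, giving $\mathbb{E}\|\bm{X}-\bm{X}'\|_2^q \approx [\textup{Mm}(\mathcal{D})]^q$. Substituting both approximations into the $q$-energy decomposition, together with the unchanged $-\mathbb{E}\|\bm{Y}-\bm{Y}'\|_2^q$, produces \eqref{eq:wtmm}.

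The main obstacle is not any single calculation but rather the delicate nature of the ``approximation'' itself: one must be content with a qualitative, small-$q$ statement rather than a quantitative error bound. The substitutions pick the argmax in one case and argmin in the other, and both choices are simultaneously justified only because $h^q$ flattens out as $q\to 0^+$. A secondary technical point is that on design points $h_{\bm{y}}$ may vanish (so that $h_{\bm{y}}^q$ is not uniformly bounded below by a positive constant near the diagonal); this can be sidestepped by first removing the measure-zero set of collisions in the integral for the first term, and by noting that the diagonal contributes zero to the double sum for the third term, so the approximation regime $h>0$ holds where it needs to.
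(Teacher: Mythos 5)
Your proposal is correct at the same heuristic level as the paper's argument, and it uses the same overall decomposition of $\mathcal{E}_q(F,F_n)$ into the three terms of \eqref{eq:terms}, with the maximin term handled essentially identically (the paper applies approximation (b) once to the full double sum and specializes at the minimizing pair; you apply it twice, inner then outer, which amounts to the same thing). The genuine difference is in how you extract $[\textup{mM}(\mathcal{D})]^q$ from the cross term $2\,\mathbb{E}\|\bm{X}-\bm{Y}\|_2^q$. The paper fixes each design point $\bm{x}_i$, partitions $[0,1]^p$ into the Voronoi region $\text{Vor}(\bm{x}_i)$ and its complement, applies approximation (a) separately on each piece evaluated at the farthest point $\bm{z}_i \in \partial\text{Vor}(\bm{x}_i)$, and then applies (b) to the resulting average of $\|\bm{z}_i-\bm{x}_i\|_2^q$ over $i$. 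You instead swap the order of averaging: for fixed $\bm{y}$ you apply (b) over the design evaluated at the nearest design point $Q(\bm{y},\mathcal{D})$, and then apply (a) over $\bm{y}\in[0,1]^p$ evaluated at the maximizer of the fill-distance function. Your route is arguably cleaner, since it surfaces $Q(\bm{y},\mathcal{D})$ (the object appearing in the definition \eqref{eq:minimax}) directly and dispenses with the Voronoi partition and the bookkeeping that $\textup{Vol}\{\text{Vor}(\bm{x}_i)\} + \textup{Vol}\{\overline{\text{Vor}(\bm{x}_i)}\cup\partial\text{Vor}(\bm{x}_i)\} = 1$; the paper's route has the virtue of exhibiting the worst-case point $\bm{z}_i$ within each Voronoi cell explicitly, which connects to the geometric intuition behind minimax designs. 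Your closing remarks on the vanishing of $h$ at design points and on the diagonal terms of the double sum are, if anything, more careful than the paper, which silently evaluates the positive-function approximations at points where $h$ can be zero; since the result is stated only as an approximation for small $q$, neither treatment is rigorous, but yours flags the issue honestly.
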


\begin{proof}
Rewrite the $q$-energy distance as follows:
\begin{equation}
\mathcal{E}_q(F,F_n) = \frac{2}{n} \sum_{i=1}^n \mathbb{E}\|\bm{x}_i - \bm{Y}\|_2^q - \frac{1}{n^2} \sum_{i=1}^n \sum_{j=1}^n \|\bm{x}_i - \bm{x}_j\|_2^q - \mathbb{E}\|\bm{Y} - \bm{Y}'\|_2^q.
\label{eq:terms}
\end{equation}
Consider the second term on the right-hand side of \eqref{eq:terms}. Using the approximation in \eqref{eq:approxb}, we have for sufficiently small $q > 0$ that:
\begin{equation}
\frac{1}{n^2} \sum_{i=1}^n \sum_{j=1}^n \|\bm{x}_i - \bm{x}_j\|_2^q \approx \min_{i,j} \|\bm{x}_i - \bm{x}_j\|_2^q = [\textup{Mm}(\mathcal{D})]^q.
\label{eq:left}
\end{equation}

Consider next the first term on the right-hand side of \eqref{eq:terms}. Let $\text{Vor}(\bm{x}_i) = \{\bm{x} \in [0,1]^p\; : Q(\bm{x},\mathcal{D}) = \bm{x}_i\}$ be the Voronoi region for design point $\bm{x}_i$, i.e., the set of points in $[0,1]^p$ closer to $\bm{x}_i$ than any other design point. Further, for a given design point $\bm{x}_i$, let $\bm{z}_i = \mathop{\mathrm{Argmax}}_{\bm{z} \in \text{Vor}(\bm{x}_i)} \|\bm{z} - \bm{x}_i\|_2$ be the farthest point (in Euclidean distance) to $\bm{x}_i$ in its Voronoi region. By definition, $\bm{z}_i$ must be on $\partial \text{Vor}(\bm{x}_i)$, the boundary of $\text{Vor}(\bm{x}_i)$. Letting $\overline{\text{Vor}(\bm{x}_i)} \equiv [0,1]^p \setminus {\text{Vor}(\bm{x}_i)}$, we have for sufficiently small $q > 0$ that:
\begin{align*}
\frac{2}{n} \sum_{i=1}^n \mathbb{E}\|\bm{x}_i - \bm{Y}\|_2^q &= \frac{2}{n} \sum_{i=1}^n \int_{[0,1]^p} \|\bm{x}_i - \bm{y}\|_2^q \; d \bm{y} \notag\\
& = \frac{2}{n} \sum_{i=1}^n \left( \int_{{\text{Vor}(\bm{x}_i)}} \|\bm{x}_i - \bm{y}\|_2^q \; d \bm{y} + \int_{\overline{\text{Vor}(\bm{x}_i)} \cup \partial \text{Vor}(\bm{x}_i)} \|\bm{x}_i - \bm{y}\|_2^q \; d \bm{y} \right) \\
& \hspace{5cm} \text{(since $\textup{Vol}(\partial \text{Vor}(\bm{x}_i)) = 0$)}\\
& \approx \frac{2}{n} \sum_{i=1}^n \Big( \textup{Vol}\{\text{Vor}(\bm{x}_i)\} \|\bm{x}_i - \bm{z}_i\|_2^q + \textup{Vol}\{\overline{\text{Vor}(\bm{x}_i)} \cup \partial \text{Vor}(\bm{x}_i)\} \|\bm{x}_i - \bm{z}_i\|_2^q\Big)\\
& \hspace{7.75cm} \text{(by \eqref{eq:approxa})}\\
& = \frac{2}{n} \sum_{i=1}^n \|\bm{z}_i - \bm{x}_i\|_2^q\\
& \hspace{1.5cm} \text{($\textup{Vol}\{\text{Vor}(\bm{x}_i)\} + \textup{Vol}\{\overline{\text{Vor}(\bm{x}_i)} \cup \partial \text{Vor}(\bm{x}_i)\} = 1$)}\\
& \approx 2 \max_{i} \|\bm{z}_i - \bm{x}_i\|_2^q \hspace{4.5cm} \text{(by \eqref{eq:approxb})}\\
& = 2 \sup_{\bm{x} \in [0,1]^p} \| \bm{x} - Q(\bm{x}, \mathcal{D})\|_2^q \hspace{0.5cm} \text{(by definition of $\bm{z}_i$)}\\
& = 2 [\textup{mM}(\mathcal{D})]^q. \hspace{2.4cm} \text{(by the definition in \eqref{eq:minimax})}
\end{align*}
Plugging the above result and \eqref{eq:left} into \eqref{eq:terms}, we obtain the desired result.
\end{proof}

\noindent Proposition \ref{thm:dist} shows that for sufficiently small choices of $q$, an approximation of the $q$-energy distance $\mathcal{E}_q(F,F_n)$ is two-parts the minimax criterion $\textup{mM}(\mathcal{D})$, minus one-part the maximin criterion $\textup{Mm}(\mathcal{D})$, neglecting constants. Employing this approximation as a surrogate objective for minimizing $\mathcal{E}_q(F,F_n)$, the resulting optimal design should therefore jointly \textit{minimize} $\textup{mM}(\mathcal{D})$ and \textit{maximize} $\textup{Mm}(\mathcal{D})$ as $q \rightarrow 0^+$. Setting $q=1$ as a approximation for the limit $q \rightarrow 0^+$, \textit{the support points which minimize $\mathcal{E}(F,F_n)$ can be viewed as designs which are two-parts minimax and one-part maximin.} This trade-off yields several valuable insights on SPs as an experimental design, which we describe below.

The first insight is the implication that this trade-off has on the \textit{approximation error} of the response surface $f(\cdot)$. Recently, there has been interesting theoretical work (see, e.g., \cite{HQ2011}) on analyzing and controlling different sources of error incurred in emulation. When a Gaussian process is used as the underlying stochastic model for the unknown function $f$, two key errors arise: \textit{nominal} errors and \textit{numeric} errors \citep{HQ2011}. Here, nominal errors refer to the inaccuracies between the true function $f$ and its idealized approximation from the GP (not subject to floating point errors), and numeric errors refer to the inaccuracies arising from floating point errors in numerical computations. This body of work suggests that nominal errors can be controlled by reducing $\textup{mM}(\mathcal{D})$ (also known as the \textit{fill distance} of the design), while numeric errors can be controlled by increasing $\textup{Mm}(\mathcal{D})$ (or \textit{separation distance} of the design).

In light of this, the minimax and maximin designs introduced earlier reduce each of these two error sources \textit{separately}, by minimizing fill distance and maximizing separation distance, respectively. However, as noted in \cite{MJ2017a}, the minimization of $\textup{mM}(\mathcal{D})$ and the maximization of $\textup{Mm}(\mathcal{D})$ can be at odds; a minimax design need not be maximin, nor vice versa. To jointly control \textit{both} nominal and numeric errors, a criterion which compromises between minimax and maximin is necessary. To this end, Proposition \ref{thm:dist} shows that \textit{SPs provide this desired error trade-off, with greater emphasis on controlling nominal errors}. The greater focus on nominal over numeric error is quite intuitive, because from interpolation theory (see Section 11.2 of \cite{Wen2004}), it is known that the fill distance plays a dominant role in controlling overall error.

Further insight can be gained by exploring the underlying function space for the response surface $f$. From Section \ref{sec:fbm}, we know that a fractional Brownian motion with smaller Hurst index $H=q/2$ results in more \textit{volatile} sample paths, due to an increasing negative correlation between process increments. Propositions \ref{thm:kh} and \ref{thm:dist} then show that a weighted combination of a minimax and maximin design is optimal (in a limiting sense) for integrating such volatile sample paths. This appeal to more ``wiggly'' functions is quite similar to the initial motivation for minimax and maximin designs in \cite{Jea1990}. There, the authors showed that minimax and maximin designs are $G$-optimal and $D$-optimal, respectively, when $f$ is drawn from a Gaussian process with correlation parameter $\theta \rightarrow 0^+$ (i.e., a near-independent Gaussian process). Viewed this way, \textit{SPs can be seen as optimal designs which trade-off between $G$-optimality and $D$-optimality} -- it jointly minimize the worst-case prediction error and maximize the Shannon information gain on the response surface $f$. Again, because the primary objective in emulation is to improve predictive accuracy, the greater emphasis on $G$-optimality is quite appealing.


\section{Further connection to existing designs}
\label{sec:con}
Next, we discuss further connections between SPs, PSPs and two other experimental designs: the uniform design \citep{Fan1980}, and the maximum projection (MaxPro) design \citep{Jea2015}. Specifically, we demonstrate several computational and modeling advantages of the proposed designs over these existing methods.

\subsection{Uniform designs}
\label{sec:unifdes}
Consider first the uniform designs (UDs) proposed by \cite{Fan1980}, which minimize some measure of discrepancy between the uniform distribution $F = \mathcal{U}[0,1]^p$ and the empirical distribution of the design, $F_n$. The uniform designs in \textsf{JMP} 13 \citep{SAS2016} minimize the so-called centered-$\mathcal{L}_2$ discrepancy \citep{Hic1998}:
\begin{align}
\begin{split}
CL_2(F,F_n) &= \sum_{0 \neq \bm{u} \subseteq [p]} \int_{[0,1]^{\bm{u}}} \left| F(\bm{x}) - F_n(\bm{x}) \right|^2 \; d\bm{x},
\end{split}
\label{eq:cl2}
\end{align}
which measures the $\mathcal{L}_2$-discrepancy between $F$ and $F_n$ in all possible subspaces of $[0,1]^p$. By minimizing such a criterion, UDs ensure the empirical distribution function $F_n$ well-approximates the desired uniform distribution $F$ in both the full $p$-dimensional cube, as well as its projected subspaces.

The connection between UDs and SPs can be seen by rewriting \eqref{eq:cl2} as the following kernel discrepancy (see, e.g., \cite{Hic1998}):
\begin{small}
\begin{equation}
\int_{[0,1]^p} \int_{[0,1]^p} \prod_{l=1}^p \left( 1 + \frac{|x_l-0.5|}{2} + \frac{|y_l-0.5|}{2} - \frac{|x_l-y_l|}{2} \right) d[F-F_n](\bm{x}) \; d[F-F_n](\bm{y}),
\label{eq:cl2kern}
\end{equation}
\end{small}
and rewriting the energy distance $\mathcal{E}(F,F_n)$ in \eqref{eq:energy} as:
\begin{equation}
\int_{[0,1]^p} \int_{[0,1]^p} -\|\bm{x} - \bm{y}\|_2 \; d[F-F_n](\bm{x}) \; d[F-F_n](\bm{y}).
\label{eq:enkern}
\end{equation}
Viewed this way, both UDs and SPs can be interpreted as designs which minimize some kernel-based discrepancy \citep{Hic1998}, with the kernel for UDs being the centered product term within the integral in \eqref{eq:cl2kern}, and the kernel for SPs being the negative Euclidean norm. Using some form of the Koksma-Hlawka inequality (see Theorem \ref{thm:interr}a and \cite{Dea2013}), both SPs and UDs are optimal for integration under an appropriately set function space.

Despite this similarity, SPs have two important advantages over UDs as computer experiment designs. First, in employing a \textit{distance-based kernel}, SPs have an inherent connection to existing \textit{distance-based designs}. As shown in previous sections, this then allows us to justify SPs as (a) robust designs for GP modeling, and (b) designs which jointly minimize two sources of approximation error. UDs, on the other hand, do not make use of such a distance-based kernel, and there is therefore little justification for why such designs are appropriate for functional approximation (this lack of justification is further echoed in \cite{Wie1991} and \cite{Sea2013}).

Second, from a practical perspective, SPs can be generated much more efficiently than UDs, the latter being computationally expensive to generate even for small designs in moderate dimensions. For example, a $(n,p)=(100,10)$ UD requires several \textit{hours} of computation using \textsf{JMP} 13! To contrast, SPs can be computed in a fraction of this time by exploiting the difference-of-convex structure in $\mathcal{E}(F,F_n)$ for optimization. For the same $(n,p) = (100,10)$ example, SPs can be generated within several \textit{seconds} \citep{MJ2017b}, and a much larger, higher-dimensional $(n,p) = (10000,500)$ design can be generated within 3 hours. Section \ref{sec:alg} outlines more computational comparisons and improved algorithms for generating SPs and PSPs.

\subsection{Maximum projection designs}
\label{sec:maxpro}
Consider next the MaxPro designs \citep{Jea2015}, which minimizes the following criterion with $\lambda = 0$:
\begin{equation}
\text{MaxPro}(\mathcal{D}; \lambda) = \sum_{i=1}^{n-1} \sum_{j=i+1}^n \frac{1}{\prod_{l=1}^p \left\{ (x_{i,l} - x_{j,l})^2 + \lambda \right\} }, \quad \bm{x}_i = (x_{i,l})_{l=1}^p.
\label{eq:maxpro}
\end{equation}
This so-called MaxPro criterion can be derived as the maximin criterion in \eqref{eq:maximin}, subject to a prior distribution which dictates effect sparsity -- the belief that only several variables are active in a high-dimensional function. As such, MaxPro designs enjoy excellent maximin performance on both the full space $[0,1]^p$, as well as on its projected subspaces.

An insightful connection between PSPs and MaxPro designs is revealed by assuming a simpler anisotropic Gaussian kernel for \eqref{eq:kern}, with scale parameters $\theta_l$ following i.i.d. $Exp(\lambda)$ priors, where $\lambda$ is a rate parameter. The motivation for this is again effect sparsity; under such a prior, only several $\theta_l$'s are large, which suggests the underlying response surface is dominated by only several key effects. Subject to such a prior, the following theorem gives a closed-form expression of the PSP objective function \eqref{eq:psp}:
\begin{theorem}[PSPs and MaxPro designs]
Consider the generalized Gaussian kernel in \eqref{eq:kern}, with $\theta_{\bm{u}} = \theta_l$ for $\bm{u} = \{l\}$, and 0 otherwise. If $\theta_l \distas{i.i.d.} Exp(\lambda)$, then the objective in \eqref{eq:psp} is proportional to:
\begin{equation}
\textup{MaxPro}(\mathcal{D};\lambda) - \frac{n}{\lambda^{p/2}} \sum_{i=1}^n \prod_{l=1}^p \phi(x_{i,l};\lambda) + \frac{n}{2\lambda^p},
\label{eq:maxproadj}
\end{equation}
where $\phi(x;\lambda) = \textup{tan}^{-1}\left({x}/{\sqrt{\lambda}}\right) + \textup{tan}^{-1}\left((1-x)/{\sqrt{\lambda}}\right)$.
\label{thm:maxpro}
\end{theorem}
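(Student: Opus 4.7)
The plan is to explicitly evaluate the two expectations appearing in the PSP objective \eqref{eq:psp} under the stated kernel and prior, and then reorganize the resulting expression to isolate the MaxPro criterion \eqref{eq:maxpro} together with the two additional terms in \eqref{eq:maxproadj}.

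First, I would simplify the kernel. Since $\theta_{\bm{u}} = 0$ except when $\bm{u} = \{l\}$, the generalized Gaussian kernel \eqref{eq:kern} collapses to the anisotropic form $\gamma_{\boldsymbol{\theta}}(\bm{x},\bm{y}) = \prod_{l=1}^p \exp\{-\theta_l (x_l - y_l)^2\}$. Because the $\theta_l$ are independent across $l$, the expectation with respect to the prior factors across coordinates, and the one-dimensional factor is just the Laplace transform of $\text{Exp}(\lambda)$ evaluated at $(x_l - y_l)^2$:
\begin{equation*}
\mathbb{E}_{\theta_l \sim \text{Exp}(\lambda)}\!\left[e^{-\theta_l (x_l-y_l)^2}\right] = \frac{\lambda}{\lambda + (x_l - y_l)^2}.
\end{equation*}
Taking the product over $l$ gives $\mathbb{E}_{\boldsymbol{\theta}}[\gamma_{\boldsymbol{\theta}}(\bm{x},\bm{y})] = \lambda^p / \prod_{l=1}^p\{(x_l - y_l)^2 + \lambda\}$.

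Next, I would plug this into the double sum $\tfrac{1}{n^2}\sum_{i,j}\mathbb{E}_{\boldsymbol{\theta}}[\gamma_{\boldsymbol{\theta}}(\bm{x}_i,\bm{x}_j)]$. Separating the $n$ diagonal terms (each equal to $1$) from the off-diagonal terms yields
\begin{equation*}
\frac{1}{n} + \frac{2\lambda^p}{n^2} \sum_{i=1}^{n-1}\sum_{j=i+1}^{n} \frac{1}{\prod_{l=1}^p\{(x_{i,l}-x_{j,l})^2 + \lambda\}} = \frac{1}{n} + \frac{2\lambda^p}{n^2}\,\text{MaxPro}(\mathcal{D};\lambda),
\end{equation*}
which identifies the MaxPro criterion in \eqref{eq:maxproadj}.

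For the first PSP term, I would compute $\mathbb{E}_{\bm{Y},\boldsymbol{\theta}}[\gamma_{\boldsymbol{\theta}}(\bm{x}_i,\bm{Y})]$ with $\bm{Y} \sim \mathcal{U}[0,1]^p$. By Fubini and independence across coordinates, it suffices to evaluate, for each $l$,
\begin{equation*}
\int_0^1 \frac{\lambda}{\lambda + (x_{i,l} - y)^2}\, dy = \sqrt{\lambda}\left[\tan^{-1}\!\left(\tfrac{x_{i,l}}{\sqrt{\lambda}}\right) + \tan^{-1}\!\left(\tfrac{1-x_{i,l}}{\sqrt{\lambda}}\right)\right] = \sqrt{\lambda}\,\phi(x_{i,l};\lambda),
\end{equation*}
via the substitution $u = (x_{i,l}-y)/\sqrt{\lambda}$ and the antiderivative of $1/(1+u^2)$. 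Taking the product over $l$ gives $\lambda^{p/2}\prod_l \phi(x_{i,l};\lambda)$, so the first term in \eqref{eq:psp} equals $-\tfrac{2\lambda^{p/2}}{n}\sum_{i=1}^n\prod_{l=1}^p \phi(x_{i,l};\lambda)$.

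Finally, I would combine the two contributions and multiply through by the positive, design-independent constant $n^2/(2\lambda^p)$, which recovers the expression in \eqref{eq:maxproadj} exactly and establishes the claimed proportionality. The main obstacle is purely bookkeeping: correctly isolating the diagonal contribution $n/n^2 = 1/n$ (which becomes the additive constant $n/(2\lambda^p)$ after rescaling) and tracking the powers of $\lambda$ through the Laplace transform and the Cauchy-type integral; the arctangent integration and the independence reductions are routine.
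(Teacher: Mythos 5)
Your proposal is correct and follows essentially the same route as the paper's proof: evaluate the prior expectation of the kernel via the Laplace transform of $Exp(\lambda)$, evaluate the expectation over $\bm{Y}\sim\mathcal{U}[0,1]^p$ coordinate-wise to obtain the arctangent term $\phi$, and rescale by the design-independent constant $n^2/(2\lambda^p)$. The bookkeeping of the diagonal terms and powers of $\lambda$ checks out exactly against \eqref{eq:maxproadj}.
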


\begin{proof}
Take the Gaussian kernel in \eqref{eq:kern}, with $\theta_{\bm{u}} = \theta_l$ for $\bm{u}=\{l\}$ and 0 otherwise. With $\bm{Y} \sim F = \mathcal{U}[0,1]^p$ and $\theta_l \distas{i.i.d.} Exp(\lambda)$, one can then show:
\begin{align}
\mathbb{E}_{\boldsymbol{\theta}\sim\pi}[\gamma_{\boldsymbol{\theta}}(\bm{x}_i,\bm{x}_j)] &= \int_{\mathbb{R}_+^p} \exp\left\{ - \sum_{l=1}^p \theta_l (x_{i,l} - x_{j,l})^2 \right\} \prod_{l=1}^p \Big\{ \lambda \exp(-\lambda\theta_l) \Big\} \; d\boldsymbol{\theta} \notag\\
&= \lambda^p \prod_{l=1}^p \left( \int_{\mathbb{R}_+} \exp\left\{ - \theta_l [(x_{i,l} - x_{j,l})^2 + \lambda] \right\} \; d\theta_l \right) \notag\\
& = \frac{\lambda^p}{\prod_{l=1}^p \left( [x_{i,l} - x_{j,l}]^2 + \lambda \right)}, \notag
\end{align}
and:
\begin{align*}
\mathbb{E}_{\bm{Y},\boldsymbol{\theta}\sim\pi}[\gamma_{\boldsymbol{\theta}}(\bm{x}_i,\bm{Y})] &= \mathbb{E}_{\bm{Y}} \mathbb{E}_{\boldsymbol{\theta} \sim \pi} [\gamma_{\boldsymbol{\theta}}(\bm{x}_i,\bm{Y})] \notag \\
&= \lambda^p \prod_{l=1}^p \left( \mathbb{E}_{Y_l}\left[ \frac{1}{ (x_{i,l} - Y_{l})^2 + \lambda } \right] \right) \quad \text{($Y_l$'s are i.i.d.)}\\
&= \lambda^{p/2}\prod_{l=1}^p \phi(x_{i,l};\lambda),
\end{align*}
where $\phi(x;\lambda) = \textup{tan}^{-1}\left({x}/{\sqrt{\lambda}}\right) + \textup{tan}^{-1}\left((1-x)/{\sqrt{\lambda}}\right)$. Using these two results, the objective function in \eqref{eq:psp} becomes:
\[\frac{2\lambda^p}{n^2} \left\{ - \frac{n}{\lambda^{p/2}} \sum_{i=1}^n \prod_{l=1}^p \phi(x_{i,l};\lambda) + \sum_{i=1}^{n-1} \sum_{j=i+1}^n \frac{1}{\prod_{l=1}^p \left( [x_{i,l} - x_{j,l}]^2 + \lambda \right)} + \frac{n}{2\lambda^p} \right\} ,\]
which completes the proof.
\end{proof}

\begin{figure}[t]
\centering
\includegraphics[width=0.7\textwidth]{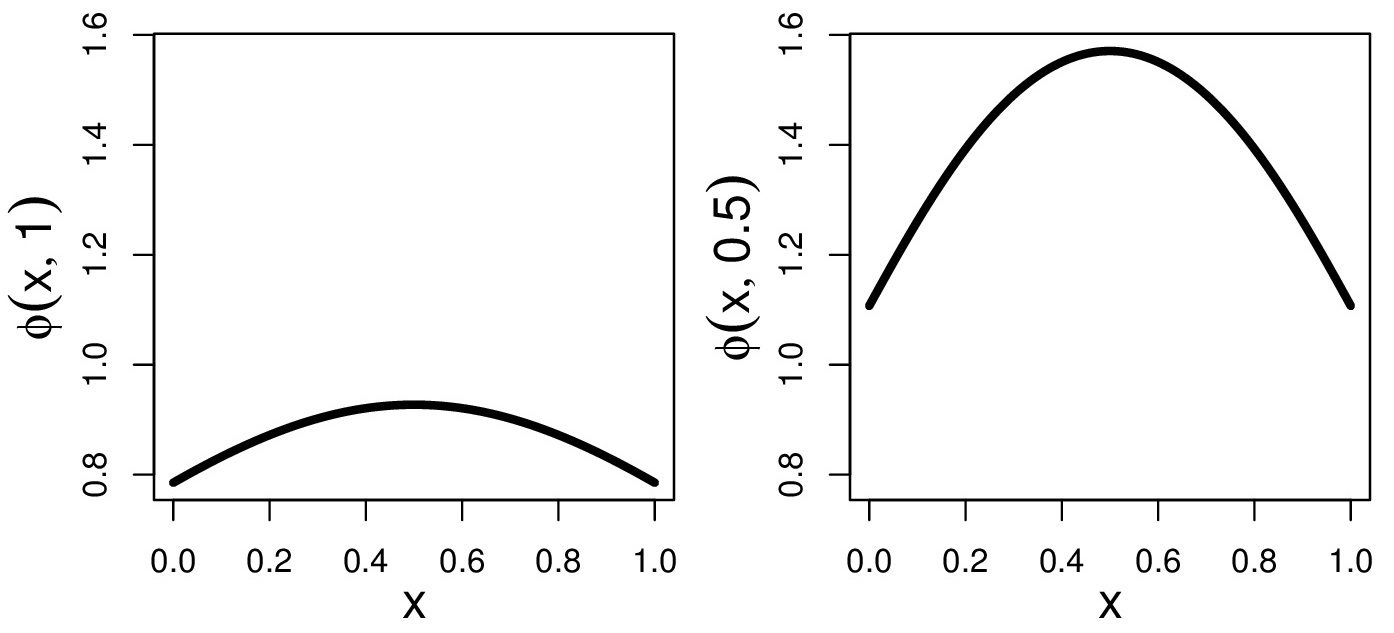}
\caption{A visualization of the correction terms $\phi(x;1)$ and $\phi(x;0.5)$. The maximization of this term pushes design points away from the boundaries of $[0,1]$.}
\label{fig:phi}
\end{figure}

In other words, subject to the sparsity prior $\theta_l \distas{i.i.d.} Exp(\lambda)$, the resulting PSPs jointly \textit{minimize} the MaxPro criterion in \eqref{eq:maxpro}, and \textit{maximize} an expression involving the correction term $\phi(x;\lambda)$. For illustration, Figure \ref{fig:phi} plots $\phi(x;\lambda)$ for $\lambda = 1$ and $\lambda = 0.5$. Note that $\phi(x;\lambda)$ has two minima on the boundary points of $x=0$ and $x=1$, and is maximized at the midpoint $x=0.5$. The maximization of the expression involving $\phi(x;\lambda)$ in \eqref{eq:maxproadj} therefore pushes design points away from the boundaries of $[0,1]^p$. This feature is particularly appealing in light of the fact that MaxPro designs are known to push design points towards boundaries \citep{Jea2015}. Theorem \ref{thm:maxpro} shows that PSPs \textit{correct} this boundary attraction behavior of MaxPro designs, thereby allowing for \textit{better uniformity} of the design on projected subspaces.

Following \cite{MJ2017c}, the PSPs employed in this paper assume the so-called product-and-order (POD) form for the generalized kernel in \eqref{eq:kern}:
\begin{equation}
\theta_{\bm{u}} = \Gamma_{|\bm{u}|}\prod_{l \in \bm{u}} \theta_l, \quad \Gamma_{|\bm{u}|} = p^{-1/4} (|\bm{u}|!)^{-1/2}, \quad \theta_l \distas{i.i.d.} Gamma(0.1,1).
\label{eq:pod}
\end{equation}
Here, the so-called \textit{product weights} $(\theta_l)_{l=1}^p$ quantify the importance of each of the $p$ variables, while the so-called \textit{order weights} $(\Gamma_{|\bm{u}|})_{|\bm{u}|=1}^\infty$ quantify the significance of effects with order $|\bm{u}|$. The POD form in \eqref{eq:pod} provides an appealing framework for encoding the three principles of experimental design (see Section \ref{sec:psp}). In particular, \textit{effect hierarchy} -- the belief that lower-order effects are more likely active than higher-order ones -- is imposed here by setting a factorial decay on the \textit{order} weights $(\Gamma_{|\bm{u}|})_{|\bm{u}|=1}^\infty$. Likewise, (strong) \textit{effect heredity} -- the belief that a higher-order effect is active only when all its lower-order component effects are active -- is imposed by the \textit{product} structure $\prod_{l \in \bm{u}} \theta_l$ in the POD specification. As before, \textit{effect sparsity} is imposed via an i.i.d. prior distribution on the product weights $(\theta_l)_{l=1}^p$. In this view, the PSPs generated under the POD specification \eqref{eq:pod} yield optimal design schemes for learning the response surface under effect hierarchy, effect heredity and effect sparsity. We refer the reader to \cite{MJ2017c} for a more detailed discussion.

\section{Optimization algorithms}
\label{sec:alg}
Before presenting simulations, we briefly review the algorithms used for generating SPs and PSPs, following \cite{MJ2017b,MJ2017c}, respectively. These algorithms exploit several appealing properties in the formulations for SPs and PSPs for efficient optimization. We provide below a high-level introduction of these methods, then present two key improvements of such methods for design optimization.

\begin{figure}[!t]
\begin{algorithm}[H]
\caption{\texttt{sp.sccp}: Computing support points}
\label{alg:spsccp}
\begin{algorithmic}
\small
\stb Warm-start $\mathcal{D}^{[0]} = \{\bm{x}_i^{[0]}\}_{i=1}^n$ using a design-of-choice.
\stb Set $k = 0$, and \textbf{repeat} until convergence:
\bi
\item Resample $\mathcal{Y}^{[k]} \distas{i.i.d.} \mathcal{U}[0,1]^p$ (or from a randomized Sobol' sequence).
\item \textbf{For} $i = 1, \cdots, n$ \textbf{do parallel}:
\bi
\item $\bm{x}_i^{[k+1]} \leftarrow M_i(\mathcal{D}^{[k]}; \mathcal{Y}^{[k]})$.
\ei
\item Update $\mathcal{D}^{[k+1]} \leftarrow \{\bm{x}_i^{[k+1]}\}_{i=1}^n$, set $k \leftarrow k + 1$.
\ei
\stb Return the converged design $\mathcal{D}^{[\infty]}$.
\normalsize
\end{algorithmic}
\end{algorithm}
\vspace{-0.65cm}
\begin{algorithm}[H]
\caption{\texttt{psp.sccp}: Computing projected support points}
\label{alg:pspsccp}
\begin{algorithmic}
\small
\setlength{\leftmargini}{10pt}
\stb Warm-start $\mathcal{D}^{[0]} = \{\bm{x}_i^{[0]}\}_{i=1}^n$ using a design-of-choice.
\stb Set $k = 0$, and \textbf{repeat} until convergence:
\bi
\item \textbf{For} $i = 1, \cdots, n$:
\bi
\item Resample $\mathcal{Y}^{[k]} \distas{i.i.d.} \mathcal{U}[0,1]^p$ (or from a randomized Sobol' sequence) and $\vartheta^{[k]} \distas{i.i.d.} \pi$.
\item $\bm{x}_i^{[k+1]} \leftarrow \widetilde{M}_i(\bm{x}_i^{[k]}; \mathcal{Y}^{[k]}, \vartheta^{[k]}, \mathcal{D}^{[k]}_{-i})$.
\item Update $\mathcal{D}^{[k]}_{i} \leftarrow \bm{x}_i^{[k+1]}$.
\ei
\item Update $\mathcal{D}^{[k+1]} \leftarrow \{\bm{x}_i^{[k+1]}\}_{i=1}^n$, set $k \leftarrow k + 1$.
\ei
\stb Return the converged design $\mathcal{D}^{[\infty]}$.
\normalsize
\end{algorithmic}
\end{algorithm}
\vspace{-0.65cm}
\end{figure}

Consider first the optimization problem for SPs in \eqref{eq:supp}, which can be viewed as a \textit{difference-of-convex} (d.c.) program (i.e., where the objective to minimize can be written as a difference of convex functions). To solve this d.c. program, \cite{MJ2017b} proposed an efficient optimization algorithm called \texttt{sp.sccp}, which makes use of the \textit{convex-concave procedure} (CCP, \cite{YR2003}). The key idea behind CCP is first to replace the concave term in the d.c. program with a convex upper bound, then solve the resulting ``surrogate'' formulation (which is now convex) using convex programming methods. With this in hand, \texttt{sp.sccp} iteratively applies the following two steps. First, the expectation in \eqref{eq:supp} is approximated using a large Monte Carlo batch sample $\mathcal{Y} = \{\bm{y}_m\}_{m=1}^N \sim F$. Next, using a quadratic upper bound on the concave term $-\|\bm{x}_i - \bm{x}_j\|_2$, this approximate formulation can be solved as a closed-form expression $M_i(\cdot; \mathcal{Y})$ using CCP (see \cite{MJ2017b} for precise form of $M_i$). By iterating these two steps, one can then show that the sequence of generated designs $(\mathcal{D}^{[k]})_{l=0}^\infty$ converges to a local optimum of \eqref{eq:supp} in $\mathcal{O}(n^2p)$ work. Algorithm \ref{alg:spsccp} summarizes these iterative steps for \texttt{sp.sccp}.

Similarly, an algorithm called \texttt{psp.sccp} is employed in \cite{MJ2017c} for generating PSPs. \texttt{psp.sccp} makes use of the fact that, for a given design point $\bm{x}_i$ (with other points fixed), the optimization in \eqref{eq:psp} for $\bm{x}_i$ can be viewed as a near-d.c. program. As before, \texttt{psp.sccp} iterates the following two steps. First, for each $\bm{x}_i$, the expectations in \eqref{eq:psp} are approximated using two Monte Carlo batch samples: the first batch $\mathcal{Y} = \{\bm{y}_m\}_{m=1}^N$ from distribution $F$, and the second batch $\vartheta = \{\boldsymbol{\theta}_r\}_{r=1}^R$ from prior $\pi$. Next, letting $\mathcal{D}_{-i}$ denote the fixed remaining $n-1$ points, the approximate formulation can then be solved as a closed-form expression $\widetilde{M}_i(\cdot; \mathcal{Y},\vartheta,\mathcal{D}_{-i})$ using CCP (see \cite{MJ2017c} for precise form of $\widetilde{M}_i$). Iterating this cyclically over all design points, the sequence of generated designs $(\mathcal{D}^{[k]})_{l=0}^\infty$ again converges to a local optimum of \eqref{eq:psp}. Algorithm \ref{alg:pspsccp} summarizes these iterative steps for \texttt{psp.sccp}.

We propose here two key improvements to these algorithms for design optimization. The first is improving on the Monte Carlo \textit{initialization} of the starting design $\mathcal{D}^{[0]}$, as originally suggested in \cite{MJ2017c,MJ2017b}. Recall that for local optimization, the choice of an \textit{initial} solution can have a large impact on the quality of the final solution. Here, this means the choice of initial design $\mathcal{D}^{[0]}$ in \texttt{sp.sccp} or \texttt{psp.sccp} can greatly influence the quality of the final design $\mathcal{D}^{[\infty]}$. Given the wealth of existing designs in the literature, we make use of such designs to provide an \textit{informed} initialization of \texttt{sp.sccp} or \texttt{psp.sccp}. Put another way, \texttt{sp.sccp} or \texttt{psp.sccp} can be viewed as a \textit{post-processing} step which improves space-filling properties of an existing design, all the while retaining desired characteristics of this original design. As indicated in Algorithms \ref{alg:spsccp} and \ref{alg:pspsccp}, we recommend both methods be initialized with a user's design-of-choice, whether that be a favorite space-filling design or a low-discrepancy point set. Compared to the original Monte Carlo initialization, such a modification yields better quality final designs in our experience, and offers users the flexibility of customizing SPs and PSPs to their design-of-choice. The second improvement is on the batch sampling step $\mathcal{Y} \sim F$. For the specific case of $F = \mathcal{U}[0,1]^p$ here, we found that batch sampling from randomized low-discrepancy sequences (specifically, the randomized Sobol' sequences in \cite{Owe1998}) can yield improved optimization performance to the original Monte Carlo batch sampling in \cite{MJ2017c, MJ2017b}. This is not too surprising, because the former enjoys improved sampling properties to the latter. This modified batch sampling step is bracketed in Algorithms \ref{alg:spsccp} and \ref{alg:pspsccp}.

\begin{figure}[t]
\centering
\includegraphics[width=0.8\textwidth]{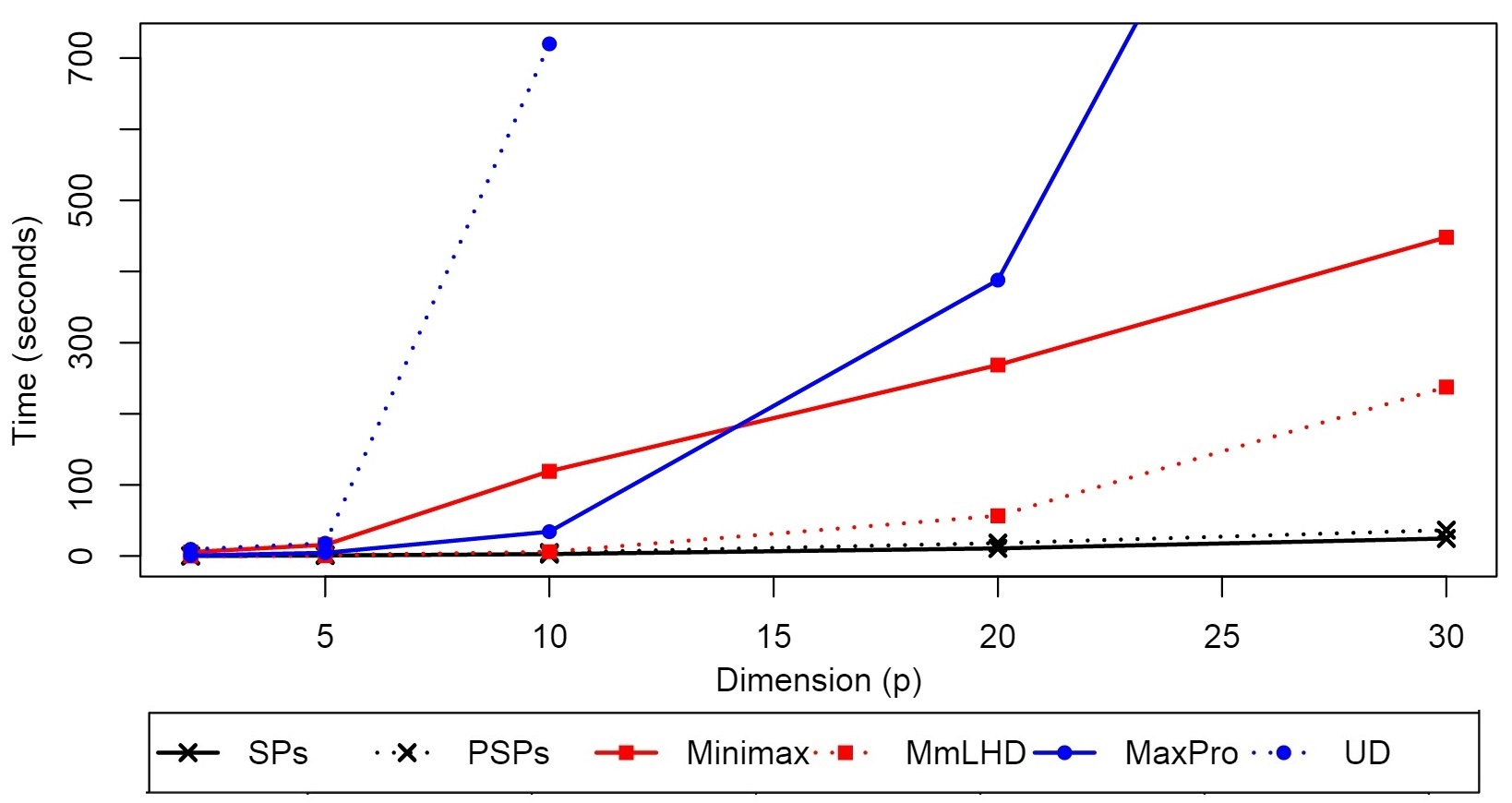}
\caption{Computation time (in seconds) for $n=10p$-point designs on $[0,1]^p$, for various dimensions $p$.}
\label{fig:comptime}
\end{figure}

To illustrate the computational efficiency of \texttt{sp.sccp} and \texttt{psp.sccp} over existing designs, Figure \ref{fig:comptime} shows the computation times (in seconds) required for generating $n=10p$-point designs in $[0,1]^p$ (following the run size recommendation in \cite{Loe2009}), with $p$ as large as 30. All computations were performed on a 2.60Ghz quad-core processor; see Section \ref{sec:vis} for further implementation details. We see that SPs, PSPs and MmLHDs are the quickest designs to generate for small run sizes in low dimensions, with SPs and PSPs much more efficient than existing design algorithms for large run sizes in high dimensions. By exploiting the appealing optimization structure in \eqref{eq:supp} and \eqref{eq:psp}, SPs and PSPs can therefore be generated more efficiently than existing designs (particularly for large run sizes in high dimensions), all the while enjoying improved emulation performance as well (see Sections \ref{sec:eff} and \ref{sec:app}).


\section{Simulations}
\label{sec:sim}

We now present some simulations which explore the effectiveness of SPs and PSPs as computer experiment designs. We first visually inspect the proposed designs, then investigate their distanced-based metrics and robustness for Gaussian process modeling.

\subsection{Visualization}
\label{sec:vis}
\begin{figure}[t]
\centering
\includegraphics[width=\linewidth]{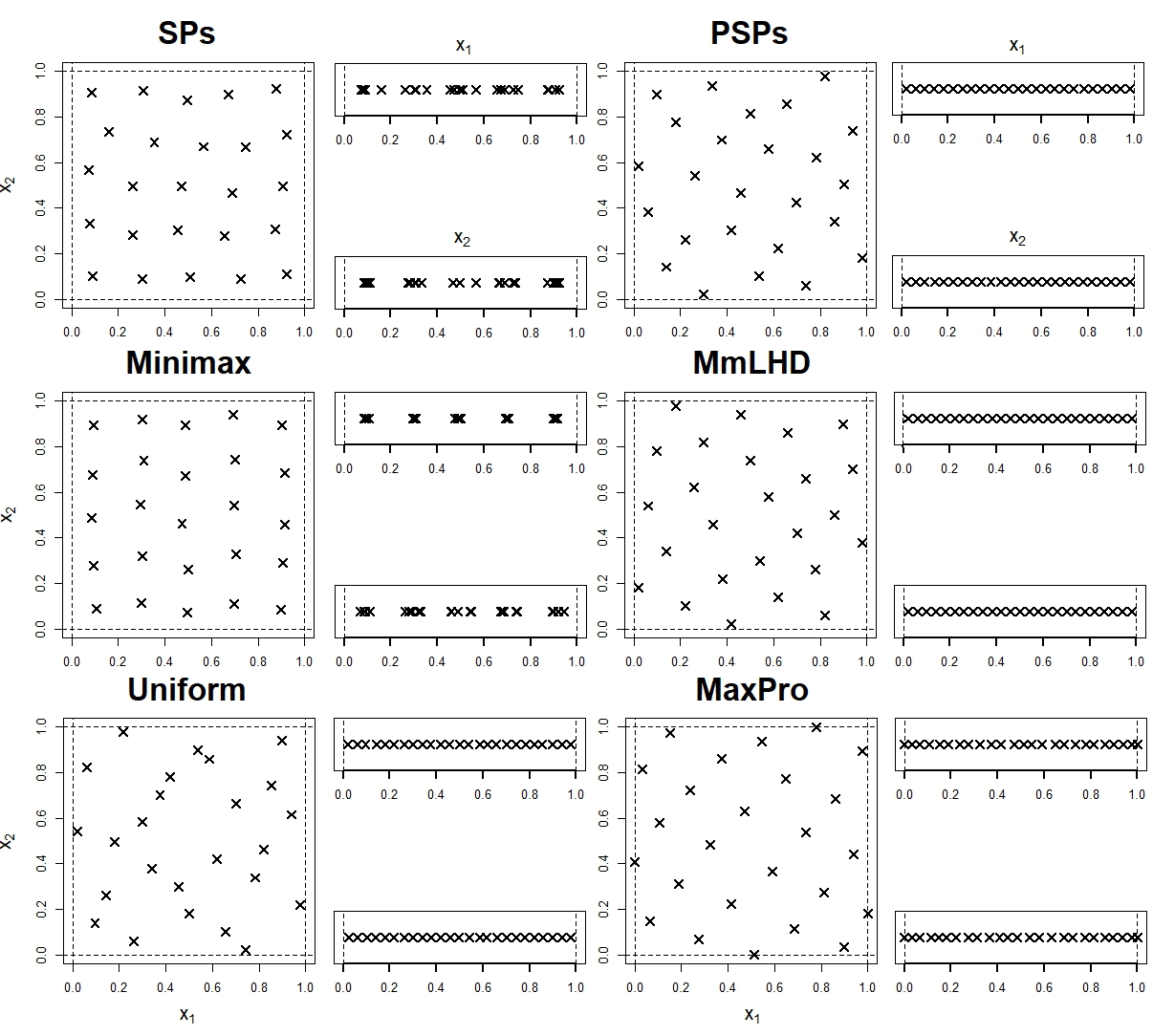}
\caption{$n=25$-point designs for SPs and PSPs on $F = \mathcal{U}[0,1]^2$, and the minimax, MmLHD, uniform and MaxPro designs. (Left) 2-d visualization; (Right) 1-d projections.}
\label{fig:visunif}
\end{figure}

We first visualize the SPs and PSPs on $F = \mathcal{U}[0,1]^p$, and four popular computer experiment designs in the literature: minimax designs (generated using the \textsf{R} package \texttt{minimaxdesign} \cite{Mak2016}), maximin-distance Latin hypercube designs (MmLHDs \cite{MM1995}; using the \textsf{R} package \texttt{SLHD} \cite{Ba2015}), uniform designs (using \textsf{JMP} 13 \cite{SAS2016}) and MaxPro designs (using the \textsf{R} package \texttt{MaxPro} \cite{BJ2015}). The SPs and PSPs used here are generated using \texttt{sp.sccp} and \texttt{psp.sccp}, and are initialized using randomized Sobol' sequences \citep{Owe1998} and MaxPro designs, respectively.

Figure \ref{fig:visunif} plots the $n=25$-point designs on the 2-d hypercube $[0,1]^2$, as well as its 1-d projections onto the two coordinate axes. Four interesting observations can be made here. First, the minimax design appears to be (nearly) a full factorial design with five levels on each factor, and enjoys good space-fillingness on the full space $[0,1]^2$ but poor coverage after projections. Visually, SPs look quite similar to the minimax design but with points slightly pushed towards boundaries, which is in line with the observation that SPs are a weighted combination of minimax and maximin designs (Section \ref{sec:mm}), with greater emphasis on the former. Second, comparing SPs to PSPs, PSPs enjoy greatly improved projections in 1-d by sacrificing some space-fillingness in 2-d. Hence, SPs should be used when \textit{all} factors are known to be active in the computer experiment, whereas PSPs should be used when \textit{some} factors are known to be less active. Third, comparing SPs and PSPs with the UD, we see that the former has much better space-fillingness in 2-d. This demonstrates the appeal of a \textit{distance-based} kernel (see Section \ref{sec:unifdes}); the use of such a kernel in the energy distance $\mathcal{E}$ yields good space-filling properties, while the lack of such a kernel in $CL_2$ results in poor space-fillingness. Lastly, the projection of the MaxPro design onto 1-d appears to push points slightly towards the boundaries of $[0,1]$; PSPs appear to correct this boundary attraction behavior of MaxPro, which is in line with the correction term $\phi(x;\lambda)$ in Theorem \ref{thm:maxpro}.

\subsection{Distance-based design metrics}
\begin{figure}[t]
\centering
\includegraphics[width=0.70\textwidth]{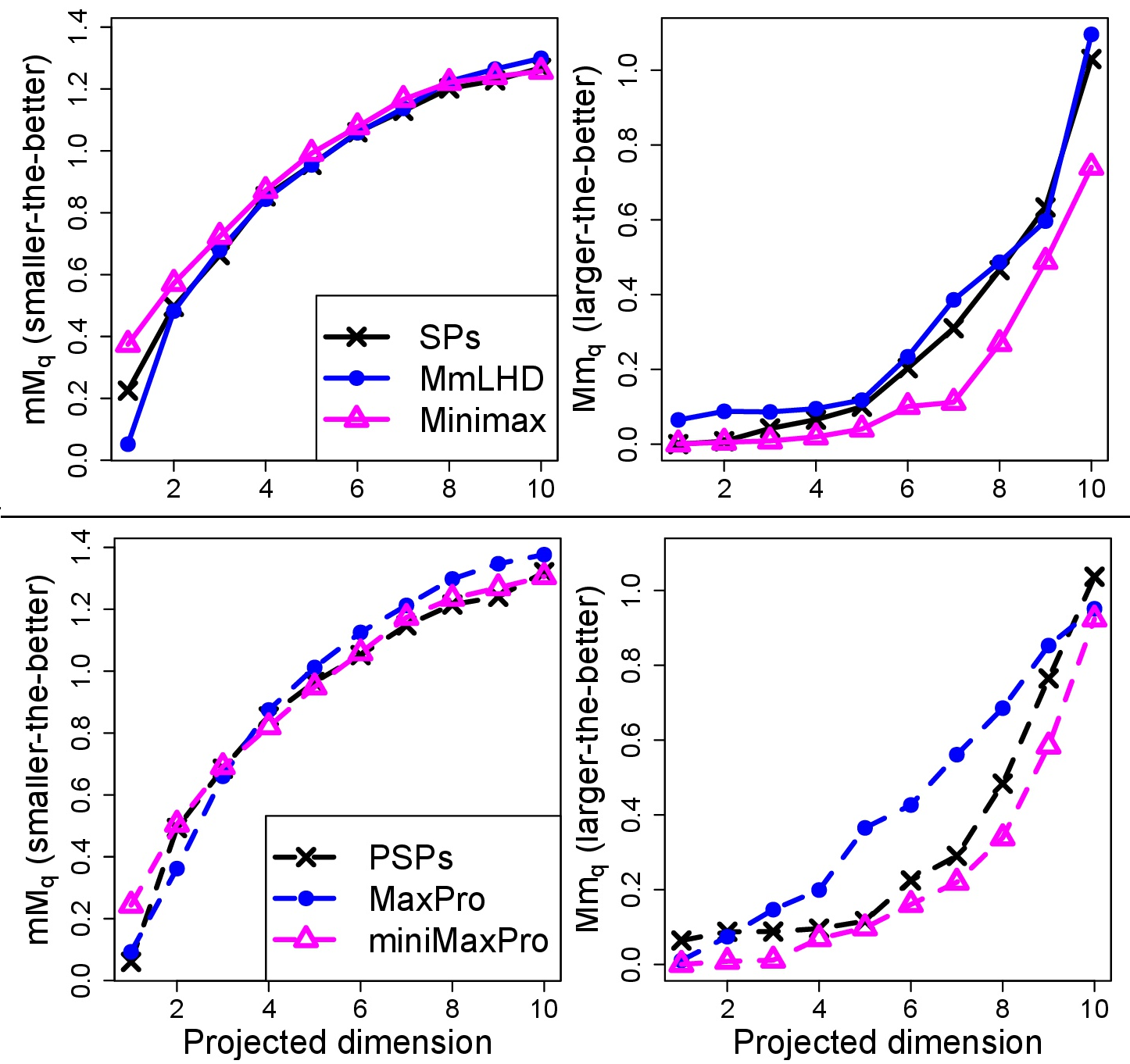}
\caption{(Top) Minimax and maximin index criteria for $n=75$-point SPs on $\mathcal{U}[0,1]^{10}$, MmLHDs and minimax designs. (Bottom) Minimax and maximin index criterion for $n=75$-point PSPs on $\mathcal{U}[0,1]^{10}$, MaxPro and miniMaxPro designs.}
\label{fig:critunif}
\end{figure}

Next, we explore two distance-based metrics for these designs (with the UD replaced by the miniMaxPro design in \cite{MJ2017a}). The first, $\textup{mM}_l(\mathcal{D})$, is the projected minimax index metric \cite{Jea2015}:
\[\textup{mM}_l(\mathcal{D}) = \max_{\emptyset \neq \bm{u} \subseteq [p], |\bm{u}| = l} \; \sup_{\bm{x} \in [0,1]^{\bm{u}}} \left\{ \frac{1}{n} \sum_{i=1}^n \frac{1}{\| \bm{x} - \mathcal{P}_{\bm{u}}\bm{x}_i \|_2^{2l}}\right\}^{-1/(2l)},\]
where $\mathcal{P}_{\bm{u}}$ is the projection operator onto the unit hypercube $[0,1]^{\bm{u}}$. In words, $\textup{mM}_l(\mathcal{D})$ measures the \textit{worst-case minimax performance} of design $\mathcal{D} = \{\bm{x}_i\}_{i=1}^n$, when \textit{projected} onto a subspace of dimension $l$, $l=1, \cdots, p$. The second, $\textup{Mm}_l(\mathcal{D})$, is the projected maximin index metric \citep{Jea2015}:
\[\text{Mm}_l(\mathcal{D}) = \min_{\emptyset \neq \bm{u} \subseteq [p], |\bm{u}|=l} \left\{\frac{1}{{n \choose 2}} \sum_{i=1}^{n-1} \sum_{j=i+1}^n \frac{1}{\|\mathcal{P}_{\bm{u}}\bm{x}_i - \mathcal{P}_{\bm{u}}\bm{x}_j \|_2^{2l}} \right\}^{-1/(2l)}.\]
$\textup{Mm}_l(\mathcal{D})$ measures the \textit{worst-case maximin performance} of design $\mathcal{D}$, when \textit{projected} onto a subspace of dimension $l$. Note that, when $l = p$, both the minimax index criterion $\textup{mM}_p(\mathcal{D})$ and the maximin index criterion $\textup{Mm}_p(\mathcal{D})$ well-approximates the original minimax distance $\textup{mM}(\mathcal{D})$ and maximin distance $\textup{Mm}(\mathcal{D})$, up to constants.

Figure \ref{fig:critunif} plots the minimax and maximin index criteria for each design on $[0,1]^{10}$. Consider first the top two plots comparing $\textup{mM}_l$ and $\textup{Mm}_l$ for SPs, MmLHD and minimax designs -- designs which focus on space-fillingness on the full 10-d space (although MmLHD does have good 1-d projections). For $l=p=10$, the top-left plot shows that minimax designs enjoy the best \textit{minimax} performance in the full space, followed by SPs and MmLHD, while the top-right plot shows that MmLHD enjoys the best \textit{maximin} performance, followed by SPs and minimax designs. Both observations are not too surprising, since the best designs explicitly optimize for their respective metric. What is interesting here is that, for both $\textup{mM}_l$ and $\textup{Mm}_l$, the trade-off between minimax and maximin designs for SPs can be observed numerically. For $\textup{mM}_{10}$, SPs provide slightly worse performance to minimax designs, but much better performance to MmLHDs. Likewise, for $\textup{Mm}_{10}$, SPs perform slightly worse than MmLHDs, but much better than minimax designs. This supports the observation in Section \ref{sec:minmax} that SPs are a compromise between minimax and maximin designs.

Consider next the bottom plots in Figure \ref{fig:critunif}, which compare PSPs, MaxPro and miniMaxPro designs -- designs which account for space-fillingness on all lower-dimensional subspaces. Two observations are of interest here. First, PSPs appear to provide a nice trade-off between MaxPro designs -- which address \textit{maximin} performance on projections -- and miniMaxPro designs -- which address \textit{minimax} performance on projections. Second, we see that the maximin index metric in low dimensions (specifically, $\textup{Mm}_1$) is considerably higher for PSPs than MaxPro. This again shows the PSP correction term $\phi(x;\lambda)$ in Theorem \ref{thm:maxpro} is effective in rectifying the boundary attraction behavior of MaxPro designs in low dimensions.

\subsection{Robustness for Gaussian process modeling}
\label{sec:eff}

\begin{figure}[t]
\centering
\footnotesize
\begin{tabular}{ c " c c " c c " c c " c c}
\specialrule{2pt}{1pt}{1pt}
\textbf{Smoothness} & \multicolumn{4}{c"}{$\Theta = (0,5]^s$} & \multicolumn{4}{c}{$\Theta = (0,20]^s$}\\
\hline
\textbf{\% active} & \multicolumn{2}{c"}{\textit{100\% active}} & \multicolumn{2}{c"}{\textit{40\% active}} & \multicolumn{2}{c"}{\textit{100\% active}} & \multicolumn{2}{c}{\textit{40\% active}} \\
\hline
${(p,n)}$ & $(5,30)$ & $(5,120)$ & $(5,30)$ & $(5,120)$ & $(5,30)$ & $(5,120)$ & $(5,30)$ & $(5,120)$ \\
\specialrule{2pt}{1pt}{1pt}
{SPs} & \ulcrd{1.000} & \ulcrd{0.967} & 0.761 & 0.800 & \ulcrd{0.993} & \ulcrd{0.968} & 0.828 & 0.832\\
{PSPs} & \ull{\cbl{0.938}} & \ull{\cbl{0.927}} & \ulcrd{0.876} & \ulcrd{0.855} & 0.967 & 0.941 & \ulcrd{0.910} & \ulcrd{0.859}\\
{Minimax} & 0.887 & 0.839 & 0.574 & 0.426 & 0.967 & \ull{\cbl{0.960}} & 0.694 & 0.551\\
{miniMaxPro} & 0.915 & 0.906 & 0.765 & 0.629 & \ull{\cbl{0.973}} & 0.945 & 0.793 & 0.784\\
{MmLHD} & 0.917 & 0.873 & \ull{\cbl{0.774}} & \ull{\cbl{0.833}} & 0.960 & 0.926 & \ull{\cbl{0.862}} & 0.832\\
{MaxPro} & 0.823 & 0.819 & 0.772 & 0.775 & 0.900 & 0.871 & 0.822 & \ull{\cbl{0.848}}\\
\specialrule{2pt}{1pt}{1pt}
\end{tabular}
\normalsize
\captionof{table}{$\text{Eff}(\mathcal{D};\Theta)$ for varying function smoothness and proportion of active factors. The double underline $\ull{\ull{\; \cdot \;}}$ highlights the best design (also colored red); the single underline $\ull{\; \cdot \;}$ highlights the second-best design (also colored blue).}
\label{tbl:hard}
\end{figure}

Finally, we inspect the robustness of SPs and PSPs for Gaussian process modeling. Consider first the integrated root-mean-squared error (IRMSE) metric:
\begin{equation}
IRMSE_{\boldsymbol{\theta}}(\mathcal{D}) = \int_{[0,1]^p} RMSE_{\boldsymbol{\theta}}(\bm{x};\mathcal{D}) \; d\bm{x},
\label{eq:irmse}
\end{equation}
with $r_{\boldsymbol{\theta}}(\bm{x},\bm{y})$ being the Gaussian correlation function $\exp\{-\sum_{l=1}^p \theta_l (x_l - y_l)^2\}$. In words, $IRMSE_{\boldsymbol{\theta}}(\mathcal{D})$ quantifies the integrated prediction error from design $\mathcal{D}$ when $f$ is drawn from the Gaussian process $GP\{\mu, r_{\boldsymbol{\theta}}(\cdot,\cdot)\}$. A good design for prediction should therefore minimize such a metric. From our experience, we found the $IRMSE$ criterion to be a more suitable metric than the more popular integrated mean-squared error criterion \citep{Sea1989a}, because the former directly corresponds to confidence intervals for predictions.

Using the IRMSE criterion, we employ the following \textit{efficiency} metric \citep{Sea1989b} to evaluate the robustness of a design for GP modeling:
\begin{equation}
\textup{Eff}(\mathcal{D};\Theta) \equiv \min_{\boldsymbol{\theta} \in \Theta} \frac{IRMSE_{\boldsymbol{\theta}}(\mathcal{D}^*(\boldsymbol{\theta}))}{IRMSE_{\boldsymbol{\theta}}(\mathcal{D})}, \; \mathcal{D}^*(\boldsymbol{\theta}) = \underset{\mathcal{D}}{\textup{argmin}} \; IRMSE_{\boldsymbol{\theta}}(\mathcal{D}).
\label{eq:eff}
\end{equation}
This criterion can be explained in several parts. Here, $\mathcal{D}^*(\boldsymbol{\theta})$ is the optimal design which minimizes IRMSE for fixed correlation parameters $\boldsymbol{\theta}$. Taking the ratio of this optimal error, $IRMSE_{\boldsymbol{\theta}}(\mathcal{D}^*(\boldsymbol{\theta}))$, over the error of the considered design, $IRMSE_{\boldsymbol{\theta}}(\mathcal{D})$, the efficiency $\textup{Eff}(\mathcal{D};\Theta)$ returns the \textit{worst-case} error ratio over a candidate set $\Theta$ of plausible correlation parameters. A larger efficiency therefore suggests a more \textit{robust} design for GP modeling, in situations where correlation parameters are known only to be found in $\Theta$.

With this efficiency criterion, we examine the robustness of these designs in $p=5$ dimensions using $n=30$ and $n=120$-point designs, with $\mathcal{D}^*(\boldsymbol{\theta})$ estimated by the lowest IRMSE amongst the six designs. Letting $s$ be the number of active factors, Table \ref{tbl:hard} summarizes these efficiencies under two classifications: (a) whether all factors are active ($s=p$), or only some are active ($s=0.4p$), and (b) whether the underlying function is smooth ($\Theta = (0,5]^s$) or rough ($\Theta = (0,20]^s$). Here, non-active factors are assigned a correlation parameter of $\theta_l = 0$.

The results in Table \ref{tbl:hard} reveal some interesting insights. First, when the desired function is \textit{smooth} and \textit{all} factors are active, SPs provide the best efficiency of all designs, followed closely by PSPs. When the function is \textit{smooth} but only \textit{some} factors are active, PSPs give the best efficiency, followed by MmLHD. In other words, for smooth functions, SPs and PSPs can provide more robust GP modeling than both minimax-type and maximin-type designs. This is in line with a conjecture in Section \ref{sec:mm} that a trade-off between minimax and maximin can yield better designs than focusing solely on one of these objectives. Second, when the desired function is \textit{rough} and \textit{all} factors are active, SPs again provide the best efficiency over all designs, followed closely by minimax-type designs. When the function is \textit{rough} but only \textit{some} factors are active, PSPs again give the best efficiency, followed by MaxPro designs. These observations were surprising to us initially, since one expects minimax designs to be optimal for rough functions (see \cite{Jea1990}). One explanation is that, in the near-independent GP setting of $\boldsymbol{\theta} \rightarrow \infty$, the predictor $\hat{f}(\cdot)$ in \eqref{eq:pred} converges almost everywhere to the \textit{mean} constant $\mu$, so point sets which are optimal for integration (such as SPs) should perform well in such a setting as well.

\subsubsection{Recommendations}
\begin{figure}[t]
\centering
\small
\begin{tabular}{ c " c c }
\specialrule{2pt}{1pt}{1pt}
\backslashbox{\textbf{\% of active factors}}{\textbf{Smoothness}}& \textit{Smooth} & \textit{Rough}\\
\specialrule{2pt}{1pt}{1pt}
\textit{All active} & SPs & SPs \\
& (PSPs) & (Minimax)\\
\hline
\textit{Some active} & PSPs & PSPs \\
 & (MmLHD) & (MmLHD, MaxPro) \\
\specialrule{2pt}{1pt}{1pt}
\end{tabular}
\captionof{table}{Design recommendations for varying function smoothness and proportion of active factors. Alterative recommendations are bracketed.}
\normalsize
\label{tbl:rec}
\end{figure}

Table \ref{tbl:rec} summarizes the design recommendations made in this section, categorized by the smoothness of the response surface, and whether some or all factors are active. For all categories, SPs and PSPs provide higher efficiencies to existing designs, which confirms the robustness of these designs over a broad range of response surfaces (see Section \ref{sec:robust}). Existing designs do perform well in specific categories, e.g., minimax designs for rough surfaces with all factors active, or MaxPro designs for rough surfaces with some factors active, which explains why such designs have been applied successfully in practice. In the absense of such prior knowledge on function smoothness, the proposed designs can offer improved emulation performance to existing designs over a wider range of problems. We show in the next section how the recommendations in Table \ref{tbl:rec} also hold for common test functions in computer experiments.

\section{Gaussian process modeling of computer experiments}
\label{sec:app}

\begin{figure}[!t]
\centering
\footnotesize
\begin{tabular}{ c " c c c c " c " c c c c " c}
\specialrule{2pt}{1pt}{1pt}
 & \multicolumn{5}{c"}{\textbf{Exponential function}} & \multicolumn{5}{c}{\textbf{Friedman function}}\\
\hline
$n$ & $30$ & $50$ & $70$ & $90$ & Avg. & $50$ & $70$ & $90$ & $110$ & Avg.\\
\specialrule{2pt}{1pt}{1pt}
{SPs} & \ulcrd{1.000} & \ulcrd{1.000} & \ulcrd{1.000} & \ulcbl{0.950} & \ulcrd{0.988} & 0.901 & \ulcrd{1.000} & \ulcrd{1.000} & \ulcbl{0.741} & \ulcbl{0.911}\\
{PSPs} & \ulcbl{0.919} & 0.840 & \ulcbl{0.931} & \ulcrd{1.000} & \ulcbl{0.923} & 0.884 & \ulcbl{0.947} & 0.869 & \ulcrd{1.000} & \ulcrd{0.925}\\
{Minimax} & 0.324 & 0.484 & 0.228 & 0.720 & 0.439 & 0.625 & 0.736 & 0.827 & 0.608 & 0.699\\
{miniMaxPro} & 0.328 & 0.481 & 0.844 & 0.763 & 0.604 & 0.789 & 0.945 & 0.847 & 0.640 & 0.805\\
{MmLHD} & 0.718 & \ulcbl{0.893} & 0.815 & 0.762 & 0.797 & \ulcbl{0.981} & 0.845 & 0.848 & 0.692 & 0.842\\
{MaxPro} & 0.623 & 0.812 & 0.562 & 0.876 & 0.718 & \ulcrd{1.000} & 0.868 & \ulcbl{0.943} & 0.608 & 0.855\\
\specialrule{2pt}{1pt}{1pt}
\end{tabular}
\begin{tabular}{ c " c c c c " c " c c c c " c}
\specialrule{2pt}{1pt}{1pt}
& \multicolumn{5}{c"}{\textbf{8-dimensional function}} & \multicolumn{5}{c}{\textbf{Wing weight function}}\\
\hline
$n$ & $80$ & $100$ & $120$ & $140$ & Avg. & $100$ & $110$ & $120$ & $130$ & Avg.\\
\specialrule{2pt}{1pt}{1pt}
{SPs} & \ulcbl{0.978} & \ulcrd{1.000} & \ulcbl{0.892} & \ulcbl{0.840} & \ulcbl{0.928} & 0.865 & \ulcbl{0.799} & \ulcrd{1.000} & \ulcrd{1.000} & \ulcbl{0.916}\\
{PSPs} & \ulcrd{1.000} & \ulcbl{0.809} & \ulcrd{1.000} & \ulcrd{1.000} & \ulcrd{0.952} & \ulcrd{1.000} & \ulcrd{1.000} & \ulcbl{0.955} & \ulcbl{0.931} & \ulcrd{0.972}\\
{Minimax} & 0.100 & 0.327 & 0.343 & 0.290 & 0.265 & 0.512 & 0.470 & 0.449 & 0.441 & 0.470\\
{miniMaxPro} & 0.529 & 0.572 & 0.643 & 0.740 & 0.621 & 0.720 & 0.751 & 0.907 & 0.799 & 0.794\\
{MmLHD} & 0.630 & 0.681 & 0.850 & 0.821 & 0.746 & \ulcbl{0.895} & 0.694 & 0.929 & 0.890 & 0.852\\
{MaxPro} & 0.794 & 0.667 & 0.758 & 0.796 & 0.754 & 0.847 & 0.793 & 0.822 & 0.721 & 0.796\\
\specialrule{2pt}{1pt}{1pt}
\end{tabular}
\normalsize
\captionof{table}{$\widetilde{\textup{Eff}}(\mathcal{D})$ for various functions and design sizes $n$. The double underline $\ull{\ull{\; \cdot \;}}$ highlights the best design (also colored red); the single underline $\ull{\; \cdot \;}$ highlights the second-best design (also colored blue).}
\label{tbl:hardfn}
\end{figure}

We now investigate the emulation performance of SPs and PSPs for four computer experiment functions in the literature. The first is the exponential function in \cite{DP2010}, which is asymptotic near design boundaries. The second is the test function in \cite{Fea1983}, composed of an interaction term and several 1-d additive terms. The third is a computer experiment function in \cite{DP2010}, highly curved in some variables but much less in others. The last is the wing weight function in \cite{Fea2008}, which models the weight of a light aircraft wing.

Here, we use the GP predictor $\hat{f}(\cdot)$ in \eqref{eq:pred} for emulating the response surface $f$ (as implemented in the \textsf{R} package \texttt{GPfit}; \cite{Mea2015}), with $r_{\boldsymbol{\theta}}(\cdot,\cdot)$ taken to be the Gaussian correlation. To measure the quality of a design, we use the efficiency metric:
\[\widetilde{\textup{Eff}}(\mathcal{D}) = \frac{\widetilde{IRMSE}(\mathcal{D^*})}{\widetilde{IRMSE}(\mathcal{D})}, \quad \widetilde{IRMSE}(\mathcal{D}) = \int_{[0,1]^p} |f(\bm{x}) - \hat{f}(\bm{x})| \; d\bm{x},\]
where $\mathcal{D}^*$ is the optimal design (of the six considered) minimizing $\widetilde{IRMSE}$.

For each design, Table \ref{tbl:hardfn} summarizes their efficiencies $\widetilde{\textup{Eff}}(\mathcal{D})$ for several design sizes $n$, along with their average efficiency over all $n$. We make two important remarks here. First, SPs and PSPs enjoy improved emulation performance to the four existing designs for nearly all design sizes $n$, and enjoy considerably better efficiencies on average. This again supports the theoretical argument in Section \ref{sec:robust} that the proposed designs are \textit{robust} for modeling a wide range of computer experiment problems. The results in Table \ref{tbl:hardfn} also corroborate the design recommendations made in Table \ref{tbl:rec}. For example, SPs provide the best emulation of the exponential function, which is smooth and active in all factors, a result consistent with Table \ref{tbl:rec}. Likewise, PSPs provide excellent emulation of the wing weight function, which is smooth and active in some factors, an observation also in line with Table \ref{tbl:rec}.

Second, Table \ref{tbl:hardfn} shows that the adjustments made by SPs and PSPs over existing designs indeed results in improved emulation performance. Comparing SPs with minimax designs and MmLHDs, we see that by \textit{trading-off} between the minimax and maximin objectives, SPs offer better predictive performance than designs which focus solely on minimax or maximin. Similarly, comparing PSPs with MaxPro designs, we see that by correcting the boundary attraction behavior of MaxPro design points, the resulting PSPs yield considerable emulation improvement over the latter design. The designs proposed in this paper therefore offer a unified framework which corrects certain deficiencies of existing designs, and as such, provide improved emulation performance over existing methods.

\section{Conclusion}
\label{sec:concl}
In this paper, we presented a theoretical and practical argument for support points and projected support points as robust and effective designs for computer experiments. Theoretically, we showed that the proposed designs are (a) robust for Gaussian process modeling, (b) account for the three fundamental design principles of effect sparsity, effect hierarchy and effect heredity, and (c) yield improvements over popular computer experiment designs in the literature. Practically, we illustrated several computational advantages of support points and projected support points with regards to design optimization, then showed using a suite of numerical experiments that the proposed designs can yield considerably improved emulation performance over existing designs.

Looking forward, there are several exciting directions for future research. While the two-part minimax and one-part maximin trade-off for support points (Proposition \ref{thm:dist}) indeed provides improved designs, it would be interesting to see whether a different trade-off (perhaps depending on design size $n$ or dimension $p$) can yield further improvements for emulation. Also, given the excellent performance of the proposed designs in the uniform setting, it may be worthwhile exploring the use of SPs and PSPs on \textit{non-uniform} distributions as designs, particularly in scenarios where a \textit{weighted} prediction accuracy is desired over the design space.

The algorithms in this paper for generating SPs and PSPs are provided in the \textsc{R} package \texttt{support} \citep{Mak2017}, which will be made available on CRAN.

\textbf{Acknowledgements}: This research is supported by a U.S. National Science Foundation grant DMS-1712642 and a U.S. Army Research Office grant W911NF-17-1-0007.

\vspace{-0.25cm}

\footnotesize{\bibliography{references}}
\end{document}